\newcolumntype{Y}{>{\centering\arraybackslash}X}
\setlist[enumerate]{leftmargin=.5in}
\setlist[itemize]{leftmargin=.5in}
\crefname{note}{Note}{Notes}
\title{Multifidelity Approximate Bayesian Computation with Sequential Monte Carlo Parameter Sampling\thanks{
		Originally submitted to the editors January 2020; resubmitted October 2020. \funding{
			This work is supported by BBSRC through grant BB/R000816/1.}}}
\author{
	Thomas P. Prescott\thanks{Mathematical Institute, University of Oxford, Woodstock Road, Oxford, OX2 6GG, UK (\email{prescott@maths.ox.ac.uk, ruth.baker@maths.ox.ac.uk})\funding{REB is supported by a Royal Society Wolfson Research Merit
			Award and a Leverhulme Research Fellowship.}}
	\and Ruth E. Baker\footnotemark[2]
}
\newcommand{\obs}[1]{#1_{\mathrm{obs}}}
\newcommand{\fp}[1]{#1_{\mathrm{fp}}}
\newcommand{\fn}[1]{#1_{\mathrm{fn}}}
\begin{document}

\maketitle{}

\begin{abstract}
Multifidelity approximate Bayesian computation (MF-ABC) is a likelihood-free technique for parameter inference that exploits model approximations to significantly increase the speed of ABC algorithms~(Prescott and Baker, JUQ, 2020).
Previous work has considered MF-ABC only in the context of rejection sampling, which does not explore parameter space particularly efficiently.
In this work, we integrate the multifidelity approach with the ABC sequential Monte Carlo (ABC-SMC) algorithm into a new MF-ABC-SMC algorithm.
We show that the improvements generated by each of ABC-SMC and MF-ABC to the efficiency of generating Monte Carlo samples and estimates from the ABC posterior are amplified when the two techniques are used together.
\end{abstract}

\begin{keywords}
Bayesian inference, likelihood-free methods, stochastic simulation, multifidelity methods, sequential Monte Carlo
\end{keywords}

\begin{AMS}
62F15, 65C20, 65C60, 93B30
\end{AMS}

\doublespacing
\section{Introduction}
\label{s:Intro}

An important goal of the mathematical modelling of a physical system is to be able to make quantitative predictions about its behaviour.
In order to make accurate predictions, the parameters of the mathematical model need to be calibrated against experimental data.
Bayesian inference is a widely-used approach to model calibration that seeks to unify the information from observations with prior knowledge about the parameters to form a posterior distribution on parameter space~\cite{Beaumont2010,Hines2015,Schnoerr2017}.
This approach is based on Bayes' Theorem, whereby the posterior distribution is proportional to the product of the prior distribution and the likelihood of the data under the model.

However, in many practical settings, the model is often too complicated for the likelihood of the data to be calculated, making the posterior distribution unavailable.
In this case, likelihood-free methods for Bayesian parameter inference become a useful option.
Specifically, approximate Bayesian computation (ABC) is a class of such likelihood-free methods~\cite{Sisson2018,Sunnaker2013}.
Rather than calculating the likelihood of the observed data for a given parameter value, it is estimated. 
The model is simulated under a particular parameter value, and the simulated output is then compared with the observed data.
If the simulation and observations are suitably close (according to a predetermined metric and threshold~\cite{Fearnhead2012,Harrison2017a}) then, in the classical rejection sampling approach, the likelihood is approximated as $1$.
Otherwise, the likelihood is approximated as $0$.
This binary approximation is usually interpreted as an acceptance or a rejection of the parameter value input into the simulation.
Using this approach, a weighted Monte Carlo sample from an approximated posterior can be built by repeatedly proposing parameters from the prior distribution, simulating the model with each parameter, and calculating the binary weight.

One widely-acknowledged weakness of the ABC approach is its heavy reliance on repeated simulation.
There has been a significant amount of work dedicated to overcoming this reliance by exploiting methods for intelligent exploration of parameter space in order to reduce the number of simulations in areas of low likelihood~\cite{Sisson2018}.
For example, the parameters to be input into the simulation can instead be proposed from a importance distribution, rather than the prior, which we aim to construct in order to improve the algorithm's performance.
One successful approach to importance sampling is known as Sequential Monte Carlo ABC (ABC-SMC), which aims to build consecutive samples using parameter proposals taken from progressively closer approximations to the posterior, parameterised by decreasing ABC thresholds~\cite{Toni2009}.
Research in this area has considered how to choose the sequence of decreasing thresholds and distance metrics~\cite{DelMoral2012,Prangle2017}, and how best to evolve the parameters from one approximation to produce parameter proposals for the next~\cite{Alsing2018,Beaumont2009,Filippi2013}.

Another proposed strategy for overcoming the simulation bottleneck is the multifidelity ABC (MF-ABC) approach~\cite{Prescott2020}.
This approach assumes that, in addition to the model under investigation (termed the \emph{high-fidelity} model), there also exists a \emph{low-fidelity} model, depending on the same parameters, that is significantly faster to simulate, usually at the cost of being less accurate~\cite{Peherstorfer2018}.
The multifidelity approach to parameter inference uses simulations from both the low-fidelity model and high-fidelity model to approximate the likelihood.
The high-fidelity model is simulated as little as possible, to reduce the simulation bottleneck, but just enough to ensure that the resulting posterior estimate is suitably accurate.
This technique is related to multilevel approaches to ABC~\cite{Jasra2017,Tran2015,Warne2018}, which use a decreasing sequence of ABC-SMC thresholds to produce coupled estimates at different levels.
Other approaches that can be interpreted in the multifidelity framework include lazy ABC~\cite{Prangle2016}, delayed acceptance ABC~\cite{Everitt2017} and early rejection~\cite{Picchini2016}.
In each of these, low-fidelity simulations are sometimes used to reject parameters before completing a high-fidelity simulation.
Importantly, the more general MF-ABC framework in \cite{Prescott2020} allows for early acceptance as well as early rejection.

A key observation that can be made about these two techniques for improving ABC performance is that they are orthogonal, in the sense that they improve different aspects of the ABC approach.
ABC-SMC considers only improving the method for proposing parameters to use in simulations and does not directly affect the binary estimate of the likelihood.
In contrast, MF-ABC makes no change to the parameter proposals, but instead directly alters the method used to estimate the likelihood by using a combination of both low-fidelity and high-fidelity model simulations.
The complementarity of these two approaches has previously been shown in the specific context of combining delayed acceptance with SMC~\cite{Everitt2017}.
Thus, combining the general multifidelity framework of~\cite{Prescott2020} with SMC should therefore yield significant speed-up over existing methods.

\subsection{Outline}
\label{s:Outline}
In this paper we bring together these two orthogonal approaches to speeding up ABC algorithms.
We will introduce a combined multifidelity sequential Monte Carlo ABC algorithm (MF-ABC-SMC).
\Cref{s:Background} formulates the existing ABC algorithms briefly described above, and the techniques we can use to quantify their performance.
We then show how these ABC approaches can be combined in \Cref{s:MF-ABC-SMC}, by incorporating the multifidelity technique with the sequential importance sampling approach to ABC-SMC.
In \Cref{MFABCSMC} we then fully exploit the SMC framework to optimise the multifidelity approach, producing the MF-ABC-SMC algorithm in \Cref{MFABC:SMC}.
This new algorithm is applied in \Cref{s:Example} to a heterogeneous network of Kuramoto oscillators in a hierarchical Bayes parameter estimation task, to produce low-variance ABC posterior estimates significantly faster than the classical ABC-SMC approach.
Finally, in \Cref{s:Discussion}, we discuss some important open questions for further optimising the MF-ABC-SMC algorithm.

\section{Theoretical background}
\label{s:Background}

Assume that the model we are seeking to calibrate is a map (usually stochastic) from parameters $\theta$, taking values in a parameter space $\Theta$, to an output $y$, taking values in data space $\mathcal Y$.
We denote this map as a conditional density $f(\cdot~|~\theta)$ on $\mathcal Y$, and term the drawing of $y \sim f(\cdot~|~\theta)$ as simulating the model, with $y$ termed a simulation.
For Bayesian inference, we furthermore assume the existence of a prior distribution $\pi(\cdot)$ on $\Theta$, and of the observed data $\obs y \in \mathcal Y$ that will be used to calibrate the model.
The model induces the likelihood of the observed data, written $L(\theta) = f(\obs y~|~\theta)$, which is a function of $\theta$.
As described previously, the goal of Bayesian inference is to infer the posterior distribution $p(\theta~|~\obs y)$ on $\Theta$, given $\obs y$ and $\pi(\cdot)$.
Bayes' Theorem equates the posterior to the product of likelihood and prior,
\[
p( \theta ~|~ \obs y) = \frac{1}{\zeta} L(\theta) \pi(\theta),
\]
where the normalisation constant, $\zeta$, ensures that $p(\theta~|~\obs y)$ is a probability density that integrates to unity.

\subsection{Approximate Bayesian computation}
\label{s:ABC}

Often, the model under consideration is sufficiently complicated that the likelihood cannot be calculated, necessitating a likelihood-free approach.
We assume that, while the value of any $f(y ~|~ \theta)$ is not available, we are still able to simulate $y \sim f(\cdot ~|~ \theta)$.
Let $d(y, \obs y)$ denote a metric that quantifies how close any simulation, $y$, is to the observed data, $\obs y$. 
For a positive threshold value $\epsilon > 0$, we can then define a neighbourhood $\Omega_\epsilon(d, \obs y) = \{ y \in \mathcal Y ~|~ d(y, \obs y)<\epsilon \}$ of model simulations that are `close' to the data.
Typically the dataset, $\obs y$, is constant for the parameter estimation task, and the distance metric, $d$, is pre-determined.
Hence we will often drop the $(d, \obs y)$ dependence from our notation and simply write $\Omega_\epsilon$ for the $\epsilon$-neighbourhood of $\obs y$ under the distance function $d$.

For a given positive distance threshold $\epsilon > 0$, ABC replaces the exact likelihood, $L(\theta)$, with the ABC approximation to the likelihood,
\begin{subequations}
\label{eq:ABC}
\begin{equation}
\label{eq:ABClikelihood}
 L_\epsilon(\theta) = \mathbb P(y \in \Omega_\epsilon ~|~ \theta) = \int \mathbb I(y \in \Omega_\epsilon) f(y~|~\theta) ~\mathrm dy,
\end{equation}
which is, to leading order for small $\epsilon$, approximately proportional to $L(\theta)$.
The ABC approximation to the likelihood then induces the ABC posterior,
\begin{equation}
\label{eq:ABCposterior}
 p_{\epsilon}(\theta~|~\obs y) = \frac{1}{Z} ~ L_\epsilon(\theta) \pi(\theta),
\end{equation}
\end{subequations}
where (similarly to $\zeta$ above) the constant $Z$ ensures that the ABC posterior is a probability distribution with unit integral.

\begin{algorithm}
\caption{Importance sampling ABC (ABC-IS)}
\label{ABC:Importance}
\begin{algorithmic}[1]
\REQUIRE{
Data $\obs y$ and neighbourhood $\Omega_\epsilon$; 
model $f(\cdot~|~\theta)$; 
prior $\pi$; 
importance distribution $\hat q$ proportional to $q$; 
sample index $n=0$; 
stopping criterion $S$.
}
\ENSURE{Weighted sample $\left\{ \theta_n, w_n \right\}_{n=1}^{N}$.}
\REPEAT{}
 \STATE{Increment $n \leftarrow n+1$.}
 \STATE{Generate $\theta_n \sim \hat q(\cdot)$.}
 \STATE{Simulate $y_n \sim f(\cdot~|~\theta_n)$.}
 \STATE{Set $w_n = \left[ \pi(\theta_n) \big/ q(\theta_n) \right] \cdot \mathbb I \left( y_n \in \Omega_\epsilon \right) $.}
\UNTIL{$S =$ \texttt{true}.}
\end{algorithmic}
\end{algorithm}

The importance sampling ABC algorithm~\cite{Owen2013,Sisson2018} presented in \Cref{ABC:Importance} (ABC-IS) presents a simple method for drawing samples from the ABC posterior, $p_\epsilon(\theta~|~\obs y)$.
In addition to the data, $\obs y$, and the prior, $\pi(\theta)$, we assume an importance probability distribution $\hat q(\theta) = q(\theta) / Z_q$ defined by the function $q(\theta)$, where $q(\theta)>0$ for all $\theta$ in the support of $\pi$.
Note that we do not assume that the normalisation constant $Z_q$ is known.
For each parameter proposal $\theta_n \sim \hat q(\cdot)$ from the importance distribution, the model $y_n \sim f(\cdot~|~\theta_n)$ is simulated and a weight $w_n = w(\theta_n, y_n)$ is assigned, using the weighting function,
\begin{equation}
 \label{eq:ImportanceWeight}
 w(\theta, y) = \left[ \pi(\theta) \big/ q(\theta) \right] \cdot \mathbb I(y \in \Omega_\epsilon) \geq 0,
\end{equation}
to produce a weighted sample $\{ w_n, \theta_n \}$.

Note that the general stopping criterion used in \Cref{ABC:Importance} allows for the algorithm to terminate after (for example) a fixed number of non-zero weights, a fixed number of parameter proposals, $N$, when a fixed budget of total computational time or memory is reached, or any other more complicated combination of such conditions.
Furthermore we note that ABC-IS is easily parallelised, although care must be taken to ensure that the chosen stopping condition is correctly applied in this case~\cite{Jagiella2017}.
If we choose the importance distribution $\hat q = \pi$ equal to the prior, then \Cref{ABC:Importance} (ABC-IS) is known as rejection sampling, which we refer to as ABC-RS.

For any arbitrary function $F(\cdot)$ defined on the parameter space $\Theta$, we can estimate the expected value of $F(\theta)$ under $p_\epsilon(\theta~|~\obs y)$, such that
\begin{equation}
 \label{eq:estimate}
 \mathbb E_{p_\epsilon}(F(\theta)) \approx \bar F = \frac{\sum_{n=1}^N w_n F(\theta_n)}{\sum_{n=1}^N w_n}.
\end{equation}
Although this estimate is biased (except in the ABC-RS case), it is consistent, such that the mean squared error (MSE) of $\bar F$ is dominated by the variance and decays to $0$ on the order $1/N$.

\subsection{Sequential Monte Carlo (SMC)}
\label{s:SMC}

Sequential Monte Carlo (SMC) is commonly used to efficiently explore parameter space in ABC.
The goal is to propagate a sample from the prior through a sequence of intermediate distributions towards the target distribution, $p_\epsilon(\theta~|~\obs y)$.
The intermediate distributions are typically the sequence of ABC approximations $p_{\epsilon_t}$, for $t=1,\dots,T$, defined by a sequence of decreasing thresholds, $\epsilon_1 > \dots > \epsilon_T = \epsilon$.
\Cref{ABC:SMC} presents the sequential importance sampling approach to ABC-SMC~\cite{Beaumont2009,DelMoral2006,Sisson2007,Toni2009}, also known as population Monte Carlo~\cite{Cappe2004}.
Each Monte Carlo sample $\{ \theta_n^{(t)}, w_n^{(t)} \}$ built at generation $t$ is used to construct an importance distribution $\hat q_{t+1}$, defined in \Cref{eq:importance}, that is used to generate the next generation's Monte Carlo sample.
The final sample, $\{ \theta_n^{(T)}, w_n^{(T)} \}$, is produced by \Cref{ABC:Importance} using importance distribution $\hat q_T$ and threshold $\epsilon_T = \epsilon$.
Note that \Cref{ABC:SMC} requires the specification of a threshold sequence $\epsilon_t$, the stopping conditions $S_t$, and the perturbation kernels $K_t(\cdot~|~\theta)$ for $t=1,\dots,T$~\cite{Sisson2018}.
We will not re-examine these aspects of ABC-SMC in detail in this paper, and will implement ABC-SMC using established techniques to select $\epsilon_t$, $S_t$ and $K_t$~\cite{Beaumont2009,DelMoral2012,Filippi2013,Toni2009}.

\begin{algorithm}
\caption{Sequential Monte Carlo ABC (ABC-SMC)}
\label{ABC:SMC}
\begin{algorithmic}[1]
\REQUIRE{
Data $\obs y$; 
sequence of nested neighbourhoods $\Omega_{\epsilon_T} \subseteq \Omega_{\epsilon_{T-1}} \subseteq \cdots \subseteq \Omega_{\epsilon_1}$ for $0 < \epsilon = \epsilon_T < \epsilon_{T-1} < \dots < \epsilon_1$; 
prior $\pi$; 
perturbation kernels $K_t(\cdot~|~\theta)$; 
initial importance distribution $\hat q_1$ (often set to $\pi$); 
model $f(\cdot~|~\theta)$; 
stopping conditions $S_1, S_2, \dots, S_T$.}
\ENSURE{Weighted sample $\left\{ \theta_n^{(T)}, w_n^{(T)} \right\}_{n=1}^{N_T}$.}
\FOR{$t=1, \dots, T-1$}
\STATE{Produce $\{ \theta_n^{(t)}, w_n^{(t)}\}_{n=1}^{N_t}$ using \Cref{ABC:Importance} (ABC-IS) with importance distribution $\hat q_t$, neighbourhood $\Omega_{\epsilon_t}$, and stopping condition $S_t$.}
\STATE{Define the next importance distribution, $\hat q_{t+1}$, proportional to 
\begin{equation}
\label{eq:importance}
q_{t+1}( \theta ) = \begin{cases}
                    	\sum_{n=1}^{N_t} w_n^{(t)} K_t( \theta ~|~\theta_n^{(t)}) \bigg/ \sum_{m=1}^{N_t} w_m^{(t)} & \pi(\theta)>0 , \\
                     0 &\text{else.}
                    \end{cases}
\end{equation}}
\ENDFOR{}
\STATE{Produce $\left\{ \theta_n^{(T)}, w_n^{(T)} \right\}_{n=1}^{N_T}$ using \Cref{ABC:Importance} (ABC-IS) with importance distribution $\hat q_T$, neighbourhood $\Omega_{\epsilon_T} = \Omega_\epsilon$, and stopping condition $S_T$.}
\end{algorithmic}
\end{algorithm}

One weakness of the population Monte Carlo approach taken in \Cref{ABC:SMC} is the $O(N^2)$ cost for each run of \Cref{ABC:Importance} (ABC-IS), where $N_t \sim N$ is the scale of each generation's sample size.
To overcome this problem, the SMC sampler was adapted in \cite{DelMoral2012} to reduce this cost to $O(N)$.
When the calculation of each $w_n^{(t)}$ is dominated by $q_{t}(\theta_n^{(t)})$, then a significant computational burden can be alleviated through using the $O(N)$ sampling approach.
However, in many practical problems, the calculation of each weight, $w_n^{(t)}$, is instead dominated by sampling $y_n^{(t)} \sim f(\cdot~|~\theta^{(t)}_n)$.
We will focus on the latter setting and aim to reduce the cost of ABC-SMC through reducing the cost of calculating each $w_n^{(t)}$ using the multifidelity approach described in \Cref{s:MFABC}.

\subsection{Multifidelity ABC}
\label{s:MFABC}
\Cref{s:SMC} describes how the SMC strategy provides parameter proposals such that simulation time is not wasted in regions of low likelihood.
An orthogonal approach to improving ABC efficiency is to avoid computationally expensive simulations, where possible, by relying on the \emph{multifidelity} framework~\cite{Prescott2020}.
We term the model of interest, $f(\cdot~|~\theta)$, which maps parameter space $\Theta$ to an output space $\mathcal Y$, as the high-fidelity model.
We now assume that, in addition, there is a low-fidelity (i.e. approximate) model, $\tilde f(\cdot~|~\theta)$, of the same physical system with the same parameter space $\Theta$.
The simulations of this model are denoted $\tilde y \sim \tilde f(\cdot~|~\theta)$, taking values in the output space $\tilde{\mathcal Y}$, which may differ from $\mathcal Y$.
Importantly, we assume that the low-fidelity model is computationally cheaper, in the sense that simulations $\tilde y \sim \tilde f(\cdot~|~\theta)$ of the low-fidelity model incur less computational burden than simulations $y \sim f(\cdot~|~\theta)$ of the high-fidelity model.

We can also assume that the experimental observations $\obs y \in \mathcal Y$ can be mapped to the new data space, giving $\obs{\tilde y} \in \tilde{\mathcal Y}$.
Similarly, we define an associated region,
\[
 \tilde \Omega_{\epsilon} = \tilde \Omega_{\epsilon} (\obs{\tilde y}, \tilde d) = \{ \tilde y \in \tilde{\mathcal Y} ~|~ \tilde d(\tilde y,\obs{\tilde y}) < \tilde \epsilon \},
\]
that is the $\tilde \epsilon$-neighbourhood of the observed data, $\obs{\tilde y}$, as a subset of the output space $\tilde{\mathcal Y}$, defined under the distance metric $\tilde d$.
However, in the interests of clarity, we will assume for the remainder of this article that the output spaces of each model fidelity are such that $\tilde{\mathcal Y} = \mathcal Y$.
Similarly, we assume that the observed data are such that $\obs{\tilde y} = \obs y$, the distance metrics are such that $\tilde d = d$, and the ABC thresholds are such that $\tilde \epsilon = \epsilon$, so that $\tilde \Omega_{\tilde \epsilon} = \Omega_\epsilon$.

In general, the models $f(\cdot~|~\theta)$ and $\tilde f(\cdot~|~\theta)$ can be simulated independently for a given $\theta$.
Then, if the low-fidelity model is a good approximation to the high-fidelity model, the outputs $y$ and $\tilde y$ will be near, in some sense, and the distances from data $d(y, \obs y)$ and $d(\tilde y, \obs y)$ will be correlated.
However, to improve this correlation we will also allow for coupling between the two models,
writing $\check f(y, \tilde y~|~\theta)$ as the coupled density for $(y, \tilde y)$ with marginals that coincide with the independent models $f(y~|~\theta)$ and $\tilde f(\tilde y~|~\theta)$.
The benefit of this approach is that, with a judicious choice of coupling, we can produce a high-fidelity simulation $y \sim f(\cdot~|~\tilde y, \theta)$ conditionally on a previously-simulated low-fidelity simulation, $\tilde y \sim \tilde f(\cdot~|~\theta)$,
where the distances are more closely correlated.
Furthermore, simulations of $y \sim f(\cdot~|~\tilde y, \theta)$ may be less computationally burdensome than independent simulations of $y \sim f(\cdot~|~\theta)$.

For example, suppose the high-fidelity model is a Markovian stochastic dynamical system on the time horizon $t \in [0,T]$, and that the low-fidelity model is the same system on $t \in [0,\tau]$ for $\tau<T$~\cite{Prangle2016}.
The low-fidelity and high-fidelity models can clearly be simulated independently on $[0,\tau]$ and $[0,T]$, respectively, and the low-fidelity simulation will, on average, be less computationally expensive than the high-fidelity simulation.
However, the natural coupling of the models fixes $y(t) = \tilde y(t)$ over $t \in [0, \tau]$, allowing the high-fidelity model to be simulated conditional on a low-fidelity simulation.
Many other possible couplings exist for different multifidelity models, often involving shared random noise processes, and methods for coupling are currently an area of active research~\cite{Croci2018,Lester2018,Prescott2020}.

In order to apply the multifidelity framework to ABC parameter inference, recall that each weight, $w_n$, generated by \Cref{ABC:Importance} requires a simulation $y_n \sim f(\cdot~|~\theta_n)$ from the high-fidelity model.
The multifidelity approach in \cite{Prescott2020} calculates the weight $w_n = w(\theta_n, \tilde y_n, u_n, y_n)$ by replacing the weighting function in \Cref{eq:ImportanceWeight} with
\begin{equation}
w(\theta, \tilde y, u, y)
= 
\frac{\pi(\theta)}{q(\theta)} \left( \mathbb I(\tilde y \in \Omega_{\epsilon}) + \frac{\mathbb I(u < \alpha(\theta, \tilde y))}{\alpha(\theta, \tilde y)} \left[ \mathbb I(y \in \Omega_{\epsilon}) - \mathbb I(\tilde y \in \Omega_{\epsilon}) \right] \right),
 \label{eq:w_mf}
\end{equation}
where $(\tilde y, y) \sim \check f(\cdot, \cdot~|~\theta)$ are coupled multifidelity simulations, where $u$ is a unit uniform random variable, and where $\alpha(\theta, \tilde y) \in (0,1]$ is a positive \emph{continuation probability}.
Note that, in general, we could allow the continuation probability, $\alpha(\theta, \tilde y, u, y)$, to depend on all of the stochastic variables, but we will assume $\alpha(\theta, \tilde y) \in (0,1]$ to be independent of $u$ and $y$.

The important consequence of the multifidelity weight in \Cref{eq:w_mf} is that, by the specific order in which we simulate the variables, the weight $w(\theta, \tilde y, u, y)$ may be calculated without the computational cost of simulating $y$.
Given $\theta$, we first simulate $\tilde y \sim \tilde f(\cdot~|~\theta)$ from the low-fidelity model.
This defines the continuation probability $\alpha(\theta, \tilde y) \in (0,1]$.
Second, we generate the unit uniform random variable, $u$.
If $u \geq \alpha(\theta, \tilde y) \in (0,1]$ then we can return $w(\theta, \tilde y, u, y)$ without simulating $y \sim f(\cdot ~|~\theta, \tilde y)$ from the coupling, thus incurring the lower computational expense of only simulating from $\tilde f$.
\Cref{MFABC:Importance} is an adaptation of \Cref{ABC:Importance} that returns a weighted sample $\{ \theta_n, w_n \}_{n=1}^N$ from the ABC posterior $p_\epsilon(\obs y ~|~ \theta)$, with weights calculated using $w$ in \Cref{eq:w_mf}.
As with \Cref{ABC:Importance}, in the rejection sampling case where $\hat q(\theta) = \pi(\theta)$, then we refer to this algorithm as MF-ABC-RS.

\begin{algorithm}
\caption{Multifidelity ABC importance sampling (MF-ABC-IS)}
\label{MFABC:Importance}
\begin{algorithmic}[1]
\REQUIRE{
Data $\obs y$ and neighbourhood $\Omega_\epsilon$; 
prior $\pi$; 
coupling $\check f(\cdot, \cdot~|~\theta)$ of models $f(\cdot~|~\theta)$ and $\tilde f(\cdot~|~\theta)$;
continuation probability function $\alpha = \alpha(\theta, \tilde y)$; 
sample index $n=0$; 
importance distribution $\hat q$ proportional to $q(\theta)$; 
stopping condition $S$.}
\ENSURE{Weighted sample $\{ \theta_n, w_n \}_{n=1}^{N}$.}
\REPEAT{}
 \STATE{Increment $n \leftarrow n+1$.}
 \STATE{Generate $\theta_n \sim \hat q(\cdot)$.}
 \STATE{Simulate $\tilde y_n \sim \tilde f(\cdot~|~\theta_n)$.}
 \STATE{Set $w_n = \mathbb I \left( \tilde y_n \in \Omega_{\epsilon} \right)$.}
 \STATE{Generate $u_n \sim \mathrm{Uniform}(0,1)$.}
 \IF{$u_n < \alpha(\theta_n, \tilde y_n)$}
  \STATE{Simulate $y_n \sim f(\cdot ~|~ \tilde y_n, \theta_n)$.}
  \STATE{Update $w_n \leftarrow w_n + \left[ \mathbb I(y_n \in \Omega_\epsilon) - w_n \right] \big/ \alpha(\theta_n, \tilde y_n)$.}
 \ENDIF{}
 \STATE{Update $w_n \leftarrow \left[ \pi(\theta_n) \big/ q(\theta_n) \right] w_n$.}
\UNTIL{$S = \texttt{true}$.}
\end{algorithmic}
\end{algorithm}


\begin{proposition}
\label{MFABCValidity}
The weighted Monte Carlo sample $\{ \theta_n, w_n\}$ returned by \Cref{MFABC:Importance} is from the ABC posterior $p_\epsilon(\theta~|~\obs y)$.
\end{proposition}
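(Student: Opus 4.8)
The plan is to establish validity by showing that the multifidelity weighting function in \Cref{eq:w_mf} produces weights whose expectation, conditioned on $\theta$, coincides exactly with that of the single-fidelity ABC-IS weight in \Cref{eq:ImportanceWeight}. Since the validity of \Cref{ABC:Importance}---that is, the consistency of the self-normalised estimator $\bar F$ in \Cref{eq:estimate} for $\mathbb E_{p_\epsilon}(F(\theta))$---rests only on this conditional expectation, matching it immediately transfers validity to \Cref{MFABC:Importance}.

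First I would recall the mechanism underlying ABC-IS. With $\theta \sim \hat q$ and $y \sim f(\cdot~|~\theta)$, the single-fidelity weight satisfies $\mathbb E[w(\theta,y)~|~\theta] = [\pi(\theta)/q(\theta)]\,L_\epsilon(\theta)$ by the definition of $L_\epsilon$ in \Cref{eq:ABClikelihood}. Averaging over $\theta \sim \hat q = q/Z_q$ then gives $\mathbb E[w F] = (Z/Z_q)\,\mathbb E_{p_\epsilon}(F)$ and, taking $F \equiv 1$, $\mathbb E[w] = Z/Z_q$, so that the ratio in \Cref{eq:estimate} targets $p_\epsilon$. The goal is thus to reproduce exactly these two expectations for the multifidelity weight.

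The core computation is to verify that the multifidelity weight inherits the same conditional expectation. Conditioning on $\theta$ and $\tilde y$, I would integrate out the uniform variable $u$ first, using $\mathbb E_u[\mathbb I(u < \alpha(\theta,\tilde y))] = \alpha(\theta,\tilde y)$, which is valid precisely because $\alpha$ is assumed independent of $u$ (and of $y$) after \Cref{eq:w_mf}. The factor $\alpha$ cancels the $1/\alpha$ reweighting, the term $\mathbb I(\tilde y \in \Omega_\epsilon)$ telescopes away, and what remains is $[\pi(\theta)/q(\theta)]\,\mathbb I(y \in \Omega_\epsilon)$. Integrating out $(\tilde y, y) \sim \check f(\cdot,\cdot~|~\theta)$ and using that the coupling has $y$-marginal $f(\cdot~|~\theta)$ gives $\mathbb E_{\check f}[\mathbb I(y \in \Omega_\epsilon)] = L_\epsilon(\theta)$. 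Hence the conditional expectation equals $[\pi(\theta)/q(\theta)]\,L_\epsilon(\theta)$, identical to the single-fidelity case, and the two target expectations above are recovered verbatim.

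The main subtlety---and the reason the algorithm remains correct despite sometimes skipping the high-fidelity simulation entirely---is the interplay between the continuation probability $\alpha$ and the $1/\alpha$ correction, which is exactly the unbiased-estimator (Horvitz--Thompson) device compensating for the random decision to simulate $y$. Two points deserve care. First, the multifidelity weight can be \emph{negative} (when $\tilde y \in \Omega_\epsilon$, $y \notin \Omega_\epsilon$, and $\alpha < 1$), so $\{\theta_n, w_n\}$ is a signed weighted sample; this does not affect the expectation argument, but it means consistency of $\bar F$ relies on $\mathbb E[w] = Z/Z_q \neq 0$ together with finite variance rather than on nonnegativity. Second, I would require $\alpha(\theta,\tilde y) \in (0,1]$ to be bounded away from zero on the support, so that $1/\alpha$ is well-defined and the estimator variance remains finite. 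With the conditional expectations matched, the consistency argument for ABC-IS applies unchanged, completing the proof.
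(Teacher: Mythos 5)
Your proof is correct and follows essentially the same route as the paper's: both arguments integrate out the uniform variable $u$ first to reduce the multifidelity weight to $[\pi(\theta)/q(\theta)]\,\mathbb I(y \in \Omega_\epsilon)$, then use the marginal property of the coupling $\check f$ to recover $L_\epsilon(\theta)$ and conclude consistency of the self-normalised estimator in \Cref{eq:estimate}. Your added remarks on signed weights and on $\alpha$ being bounded away from zero for finite variance are sensible refinements but do not change the substance of the argument.
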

\begin{proof}
We note that the density of each $z = (\theta, \tilde y, u, y)$ sampled by \Cref{MFABC:Importance} is
$$
g(z) = \check f(\tilde y, y~|~\theta) \hat q(\theta),
$$
on $\mathcal Z = \Theta \times \mathcal Y \times [0,1] \times \mathcal Y$.
Furthermore, the multifidelity weight in \Cref{eq:w_mf} integrates such that
$$
\int_0^1 w(z) ~\mathrm du = \frac{\pi(\theta)}{q(\theta)} \mathbb I(y \in \Omega_\epsilon),
$$
which is independent of $\tilde y$. 
Therefore, for any integrable $F:\Theta \rightarrow \mathbb R$, we have the identity
\begin{align*}
\int_{\Theta} F(\theta) \pi_\epsilon(\theta~|~\obs y) ~\mathrm d\theta
    &= 
    \frac{Z_q}{Z} \int_{\Theta \times \mathcal Y} F(\theta) \frac{\pi(\theta)}{q(\theta)} \mathbb I(y \in \Omega_\epsilon) f(y~|~\theta) \hat q(\theta) ~\mathrm d\theta ~\mathrm dy \\
    &=
    \frac{Z_q}{Z} \int_{\Theta \times \mathcal Y^2} F(\theta) \frac{\pi(\theta)}{q(\theta)} \mathbb I(y \in \Omega_\epsilon) \check f(\tilde y, y~|~\theta) \hat q(\theta) ~\mathrm d\theta ~\mathrm dy ~\mathrm d\tilde y \\
    &=
    \frac{Z_q}{Z} \int_{\mathcal Z} F(\theta) w(z) g(z) ~\mathrm dz.
\end{align*}
Thus the normalised Monte Carlo estimate in \Cref{eq:estimate} is consistent for a weighted sample from \Cref{MFABC:Importance}.
\end{proof}

The key issue for \Cref{MFABC:Importance} (MF-ABC-IS) is the choice of continuation probability, $\alpha(\theta, \tilde y) \in (0,1]$.
Smaller values for $\alpha$ lead to greater computational savings, since high-fidelity simulations are generated less often.
However, it is possible to show that this comes at the cost of an increase in the mean squared error (MSE) of the estimators, $\bar F$, of integrable functions $F:\Theta \rightarrow R$.
In \Cref{s:Performance}, we quantify this tradeoff through the definition of the algorithm's efficiency.
\begin{note}
The multifidelity weight in \Cref{eq:w_mf} compares a uniform $u \in [0,1]$ with $\alpha(\theta, \tilde y)$.
In more generality, we may consider any multifidelity weight $w(\theta, \tilde y, u, y)$ that depends on $u \in \mathcal U$ with any distribution, $p(u~|~\theta, \tilde y)$, that does not rely on the expensive high-fidelity model.
If $w$ is designed such that
\[ \int_{u \in U} w(\theta, \tilde y, u, y) p(u~|~\theta, \tilde y) ~\mathrm du = \frac{\pi(\theta)}{q(\theta)} \mathbb I(y \in \Omega_\epsilon), \]
then, by an extension of \Cref{MFABCValidity}, the multifidelity weight remains valid.
\end{note}

\subsection{ABC Performance}
\label{s:Performance}

In \Cref{s:Background} so far, we have summarised previous developments of ABC algorithms that aim to improve performance, but have not yet defined how to quantify this improvement.
For any weighted sample $\{\theta_n, w_n\}$ built using \Cref{ABC:Importance} (ABC-IS) or \Cref{MFABC:Importance} (MF-ABC-IS), each sampled pair $(\theta_n, w_n)$ also incurs a computational cost, which we will denote $T_n$, producing a total computational cost of $T_{\mathrm{total}} = \sum_n T_n$ for the sample.
We define the observed efficiency of such a sample as follows.

\begin{definition}
\label{def:efficiency:obs}
The \emph{effective sample size}~\cite{Elvira2018,Liu2004} of a weighted Monte Carlo sample $\{ \theta_n, w_n \}$ is 
\[
\mathrm{ESS} = \frac{\left(\sum_n w_n\right)^2}{\sum_n w_n^2}.
\]
Given a computational cost of $T_n$ for each sampled pair $(w_n, \theta_n)$, the \emph{observed efficiency} of a weighted Monte Carlo sample is
\[
\frac{\mathrm{ESS}}{T_{\mathrm{total}}} = \frac{\left(\sum_n w_n\right)^2}{\left( \sum_n w_n^2 \right) \left( \sum_n T_n \right)},
\]
which is expressed in units of effective samples per time unit.
\end{definition}

Larger values of ESS typically correspond to smaller values of MSE in estimates of the form in \Cref{eq:estimate}.
Since ESS and $T_{\text{total}}$ both scale linearly with $N$, taking limits as $N \rightarrow \infty$ in the observed efficiency motivates the following definition of the theoretical efficiency of an ABC algorithm.

\begin{definition}
\label{def:efficiency:th}
The \emph{theoretical efficiency} of an ABC algorithm generating a weighted Monte Carlo sample $\{ \theta_n, w_n \}$ with simulation times $\{ T_n \}$ is
\[
	\psi = \frac{\mathbb E(w)^2}{\mathbb E(w^2) \mathbb E(T)},
\]
which is expressed in units of effective samples per time unit.
\end{definition}

Note that the theoretical efficiency is not just a characteristic of the ABC algorithm and of the models, but also of the numerical implementation and hardware of the computers generating the simulation.
For example, the theoretical efficiency of \Cref{ABC:Importance} (ABC-IS) is
\[
\psi_{\text{ABC-IS}} = \frac{Z}{\mathbb E_{\pi_\epsilon}(\pi / \hat q) \mathbb E_{\hat q}(T)},
\]
where $Z = p(y \in \Omega_\epsilon)$ is the normalisation constant in \Cref{eq:ABCposterior}, and where $\mathbb E_\nu$ denotes expectations with respect to probability density $\nu$.
Setting $\hat q = \pi$, the theoretical efficiency of ABC-RS is
\[
\psi_{\text{ABC-RS}} = \frac{Z}{\mathbb E_{\pi}(T)}.
\]
Hence, the theoretical efficiency of ABC-IS is improved over ABC-RS by choosing an importance distribution, $\hat q$, that is more likely than $\pi$ to propose $\theta$ incurring smaller simulation times and high posterior likelihoods.
In the case of \Cref{MFABC:Importance} with $\hat q(\theta) = \pi(\theta)$ (MF-ABC-RS), this performance measure has been used to determine a good choice of continuation probability, $\alpha(\theta, \tilde y)$~\cite{Prescott2020}.

By using the theoretical efficiency, $\psi$, as a performance metric in the remainder of this paper, we will quantify the improvement in performance over \Cref{ABC:Importance,ABC:SMC,MFABC:Importance} that can be achieved by combining multifidelity and SMC techniques.

\section{Multifidelity ABC-SMC}
\label{s:MF-ABC-SMC}
There are two distinct approaches to improving the performance of ABC parameter inference specified in \Cref{s:Background}.
ABC-SMC proposes parameters $\theta_n$ from a sequence of importance distributions that progressively approximate the target ABC approximation to the posterior.
In contrast, MF-ABC enables sampled parameters to be weighted without necessarily having to produce a simulation, $y_n$, from the high-fidelity model.
In this section we present the main contribution of this paper, which is to combine these orthogonal approaches into an MF-ABC-SMC algorithm.

We will replicate the procedure of extending \Cref{ABC:Importance} (ABC-IS) into \Cref{ABC:SMC} (ABC-SMC) in the multifidelity context, focusing first on the $O(N^2)$ SMC sampler based on sequential importance sampling.
Suppose that, as in \Cref{s:SMC}, we have a decreasing sequence of thresholds, $\epsilon_1 > \dots > \epsilon_T = \epsilon$, inducing the neighbourhoods, $\Omega_{\epsilon_1} \supseteq \cdots \supseteq \Omega_{\epsilon_T}$, and a sequence of ABC posteriors, $p_{\epsilon_t}(\theta~|~\obs y)$.
In principle, we can replace the call of \Cref{ABC:Importance} (ABC-IS) in step 2 of \Cref{ABC:SMC} with a call of \Cref{MFABC:Importance} (MF-ABC-IS) instead, and the SMC algorithm would proceed in much the same way.
However, the key difficulty with implementing multifidelity ABC-SMC lies in the definition of the importance distribution, $\hat q_{t+1}$, for generation $t+1$, given the Monte Carlo sample, $\{ \theta_n^{(t)}, w_n^{(t)} \}_{n=1}^{N_t}$, returned in generation $t$. 

Since the weights, $w_n$, are now calculated using the multifidelity weight in \Cref{eq:w_mf}, there is a positive probability that there exist $w_n < 0$.
Negative values of $w_n$ are generated whenever $\tilde y_n \in \Omega_\epsilon$, but where we also simulate the high-fidelity model such that $y_n \notin \Omega_\epsilon$.
This leads to two problems with the existing definition of the importance distribution in \Cref{eq:importance}.
Primarily, there may exist $\theta$ in the prior support with $q_{t+1}(\theta) < 0$, unless the perturbation kernels, $K_t$, are carefully designed to avoid this.
Secondly, even if the $K_t$ could be chosen to guarantee $q_{t+1}(\theta)>0$ on the prior support, it is not clear that we can easily sample from the importance distribution $\hat q_{t+1} \propto q_{t+1}$ when some $\theta_n^{(t)}$ have negative weights.

\begin{algorithm}
\caption{Multifidelity ABC-SMC with pre-determined $\alpha_t$ (MF-ABC-SMC-$\alpha$)}
\label{MFABC:SMCalpha}
\begin{algorithmic}[1]
\REQUIRE{
Data $\obs y$;
sequence of nested neighbourhoods $\Omega_{\epsilon_T} \subseteq \Omega_{\epsilon_{T-1}} \subseteq \cdots \subseteq \Omega_{\epsilon_1}$ for $0 < \epsilon = \epsilon_T < \epsilon_{T-1} < \dots < \epsilon_1$; 
prior $\pi$; 
coupling $\check f(\cdot, \cdot~|~\theta)$ of models $f(\cdot~|~\theta)$ and $\tilde f(\cdot~|~\theta)$;
initial importance distribution $\hat r_1$ (often set to $\pi$); 
perturbation kernels $K_t(\cdot~|~\theta)$; 
continuation probabilities $\alpha_t(\theta, \tilde y)$;
stopping conditions $S_t$;
where $t=1,\dots,T$.
}
\ENSURE{
Weighted sample $\{\theta_n^{(T)}, w_n^{(T)} \}_{n=1}^{N_T}$.
}
\FOR{$t = 1, \dots, T-1$}
 \STATE{Produce $\{ \theta_n^{(t)}, w_n^{(t)}  \}_{n=1}^{N_t}$ from \Cref{MFABC:Importance} (MF-ABC-IS), using the neighbourhood $\Omega_{\epsilon_t}$, continuation probability $\alpha_t$, importance distribution $\hat r_t$, and stopping condition $S_t$.
 }
 \STATE{Set $\hat w_n^{(t)} = | w_n^{(t)} |$ for $n=1,\dots,N_t$.}
 \STATE{Define $\hat r_{t+1}$ proportional to $r_{t+1}$ given in \Cref{eq:new_importance} }
\ENDFOR{}
\STATE{Produce $\{ \theta_n^{(T)}, w_n^{(T)} \}_{n=1}^{N_T}$ from \Cref{MFABC:Importance}, using neighbourhood $\Omega_{\epsilon}$, continuation probability $\alpha_T$, importance distribution $\hat r_T$ and stopping condition $S_T$.}

\end{algorithmic}
\end{algorithm}

In \Cref{MFABC:SMCalpha} (MF-ABC-SMC-$\alpha$) we have adapted the SMC approach to counter the possibility of negative weights, by considering a sampling algorithm that produces non-negative weights $\hat w_n^{(t)} \geq 0$ in parallel with $w_n^{(t)}$.
This method closely parallels the approach to the sign problem taken in \cite{Lyne2015}.
At each generation we produce two weighted samples at once, $\{ \hat w_n^{(t)}, \theta_n^{(t)} \}$ and $\{ w_n^{(t)}, \theta_n^{(t)} \}$.
We replace the importance distribution $\hat q_{t+1}$ defined by $q_{t+1}$ in \Cref{eq:importance} with the importance distribution $\hat r_{t+1}$ proportional to the non-negative function
\begin{equation}
r_{t+1}(\theta) = \begin{cases}
    \sum_{n=1}^N \hat w_n^{(t)} K_t(\theta~|~\theta_n^{(t)}) \bigg/ \sum_{m=1}^N \hat w_m^{(t)} & \pi(\theta)>0,
    \\
    0 & \text{else,}
    \end{cases}
\label{eq:new_importance}
\end{equation}
which is defined by the weights $\hat w_n^{(t)} = | w_n^{(t)} |$. 
With this choice of importance distribution, the weighted samples $\{ \hat w_n^{(t)}, \theta_n^{(t)} \}$ can be shown to be drawn from the alternative target distribution,
\begin{equation}
\label{eq:new_target}
\rho_t(\theta) \propto p_{\epsilon_t}(\theta~|~\obs y) + \delta_t(\theta),
\end{equation}
where the difference between the new target distribution, $\rho_t$, and the ABC posterior, $p_{\epsilon_t}$, is given by
\[
\delta_t(\theta) =  2\pi(\theta) \int_{\mathcal Y^2} (1-\alpha_t(\theta, \tilde y)) \mathbb I(\tilde y \in \Omega_{\epsilon_t}) \mathbb I(y \notin \Omega_{\epsilon_t}) \check f(\tilde y, y~|~\theta) ~\mathrm d\tilde y ~\mathrm dy,
\]
for $t=1,\dots,T$. 
The function $\delta_t$ implies that the new target distribution (compared to $\pi_{\epsilon_t}$) contains additional density in regions of parameter space where it is more likely that $\tilde y \in \Omega_{\epsilon_t}$ but $y \notin \Omega_{\epsilon_t}$; in other words, where $w_n^{(t)} < 0$ is more likely.

The importance distribution in \Cref{eq:new_importance} effectively makes the SMC algorithm target $\rho_t$ at each generation instead of $p_{\epsilon_t}$.
However, the weighted samples $\{ w_n^{(t)}, \theta_n^{(t)} \}$ based on the multifidelity weights $w_n^{(t)}$ from each generation's weighting function,
\begin{equation}
w_t(\theta, \tilde y, u, y)
= 
\frac{\pi(\theta)}{r_t(\theta)} \left( \mathbb I(\tilde y \in \Omega_{\epsilon_t}) + \frac{\mathbb I(u < \alpha_t(\theta, \tilde y))}{\alpha_t(\theta, \tilde y)} \left[ \mathbb I(y \in \Omega_{\epsilon_t}) - \mathbb I(\tilde y \in \Omega_{\epsilon_t}) \right] \right),
\label{eq:w_mf_t}
\end{equation}
remain from the ABC posteriors $p_{\epsilon_t}$.
Hence, at any generation (and in particular at $t=T$), we can produce an estimate of a $p_{\epsilon_t}$-integrable function $F:\Theta \rightarrow \mathbb R$, such that
\[
\mathbb E_{p_{\epsilon_t}} (F) \approx \frac{\sum_n w_n^{(t)} F(\theta_n^{(t)})}{ \sum_m w_m^{(t)}},
\]
is a consistent Monte Carlo estimate of $F$ under the ABC posterior.

\begin{note}
\label{Note:ON2}
Each calculation of the weight in \Cref{eq:w_mf_t} relies on the $O(N)$ calculation of the importance weight in \Cref{eq:new_importance}, making the SMC sampler $O(N^2)$.
An alternative sampling method that is linear in $N$ is proposed in \cite{DelMoral2012}, which replaces the sequential importance sampling approach described above.
However, as noted in \cite{DelMoral2012}, both the $O(N)$ and $O(N^2)$ sampling algorithms require $O(N)$ simulations of $(\tilde y, y) \sim \check f(\cdot, \cdot~|~\theta)$ or $\tilde y \sim \tilde f(\cdot~|~\theta)$.
In this multifidelity setting, we are assuming that simulation time comprises the vast majority of the computational burden of each calculation of \Cref{eq:w_mf_t}.
The benefit of the multifidelity weight is that it reduces the computational burden of generating $(\tilde y, y)$ by sometimes requiring $\tilde y$ alone.
In \Cref{LinearSMCSampler} we show that the $O(N)$ SMC sampling algorithm dilutes this benefit.
Therefore, in this work we will focus on the sequential importance sampling SMC sampler described in \Cref{MFABC:SMCalpha}.
\end{note}

\section{Adaptive MF-ABC-SMC}
\label{MFABCSMC}

In \Cref{MFABC:SMCalpha} (MF-ABC-SMC-$\alpha$), in addition to an assumed sequence of ABC thresholds, $\epsilon_t$, perturbation kernels, $K_t$, and stopping conditions $S_t$, for $t=1,\dots,T$, we also assume a given sequence of continuation probabilities, $\alpha_t$.
In this algorithm, each importance distribution, $\hat r_{t+1}$, is determined by the output at generation $t$.
Methodologies for adaptively choosing the perturbation kernels, $K_{t+1}$, and the ABC thresholds, $\epsilon_{t+1}$, based on the preceding generations' samples, have been explored in previous work~\cite{DelMoral2012,Filippi2013}.
In this section, we will consider the adaptive approach to choosing each generation's continuation probability, $\alpha_{t+1}$, based on the simulation output of generation $t$.

\subsection{Optimal continuation probabilities}
\label{s:eta}

For each generation, $t$, the continuation probability, $\alpha_t$, is an input into that generation's call of \Cref{MFABC:Importance}. 
Dropping the generational indexing $t$ temporarily, in this subsection we first consider how to choose a continuation probability function, $\alpha(\theta, \tilde y)$, to maximise the theoretical efficiency, $\psi$, of any run of \Cref{MFABC:Importance} (MF-ABC-IS), as specified in \Cref{def:efficiency:th}.

For simplicity, we will constrain the search for optimal $\alpha(\theta, \tilde y)$ to the piecewise constant function
\begin{equation}
\label{eq:constantrates}
 \alpha(\theta, \tilde y) = \eta_1 \mathbb I(\tilde y \in \Omega_{\epsilon}) + \eta_2 \mathbb I(\tilde y \notin \Omega_{\epsilon}),
\end{equation}
for the constants $\eta_1, \eta_2 \in (0,1]$.
Here, $\eta_1$ is the probability of generating $y$ after a `positive' low-fidelity simulation (where $\tilde y \in \Omega_\epsilon$) and $\eta_2$ is the probability of generating $y$ after a `negative' low-fidelity simulation (where $\tilde y \notin \Omega_\epsilon$).
The goal of this section is to specify the values of the two parameters, $\eta_1$ and $\eta_2$, that will give the largest theoretical efficiency, $\psi$.
In previous work, we have derived the optimal values of $\eta_1$ and $\eta_2$ to use in the special case of \Cref{MFABC:Importance} (MF-ABC-RS) corresponding to rejection sampling, where the importance distribution $\hat q = \pi$ equal to the prior distribution~\cite{Prescott2020}.
We can now extend this analysis by finding optimal values of $\eta_1$ and $\eta_2$ to use in the more general case of \Cref{MFABC:Importance} (MF-ABC-IS).
The key to this optimisation is the following lemma, which describes how the efficiency of \Cref{MFABC:Importance} varies with the continuation probabilities used.
The lemma assumes a given importance distribution, $\hat q(\theta)$, defined as the normalisation of the known non-negative function $q(\theta)$.

\begin{lemma}
\label{lemma:phi}
The theoretical efficiency, given in \Cref{def:efficiency:th}, of \Cref{MFABC:Importance} (MF-ABC-IS) varies with the continuation probabilities $\eta_1$ and $\eta_2$ according to
\[
\psi(\eta_1,\eta_2) = \frac{\mathbb E(w)^2}{\mathbb E(w^2) \mathbb E(T)} = \frac{Z^2}{\phi(\eta_1, \eta_2)},
\]
where the denominator is expressed as a function of $(\eta_1, \eta_2)$ such that
\begin{equation}
\label{eq:Phi}
\phi(\eta_1, \eta_2) = 
\left(  W + \left( \frac{1}{\eta_1} - 1 \right) \fp W + \left( \frac{1}{\eta_2} - 1 \right) \fn W \right)
\left( \bar{T}_{\mathrm{lo}}
 + \eta_1 \bar T_{\mathrm{hi}, \mathrm p}
 + \eta_2 \bar T_{\mathrm{hi}, \mathrm n} \right).
\end{equation}
The coefficients in $\psi$ are given by the integrals
\begin{subequations}
\label{eq:PhiComponents}
\begin{align}
\label{eq:Z}
Z &= \int L_\epsilon(\theta) \pi(\theta) ~\mathrm d\theta
,&
&L_\epsilon(\theta) = \int_{\mathcal Y^2} \mathbb I(y \in \Omega_\epsilon) \check f(\tilde y, y~|~\theta) ~\mathrm d\tilde y ~\mathrm dy
,\\
W &= \int \frac{\pi(\theta)}{q(\theta)} L_\epsilon(\theta) \pi(\theta) ~\mathrm d\theta
,\\
\fp W &= \int \frac{\pi(\theta)^2}{q(\theta)} \fp p(\theta) ~\mathrm d\theta 
,&
&\fp p(\theta) = \int_{\mathcal Y^2} \mathbb I(\tilde y \in \Omega_\epsilon) \mathbb I(y \notin \Omega_\epsilon) \check f(\tilde y, y~|~\theta) ~\mathrm d\tilde y ~\mathrm dy 
,\\
\fn W &= \int \frac{\pi(\theta)^2}{q(\theta)} \fn p(\theta) ~\mathrm d\theta 
,&
&\fn p(\theta) = \int_{\mathcal Y^2} \mathbb I(\tilde y \notin \Omega_\epsilon) \mathbb I(y \in \Omega_\epsilon) \check f(\tilde y, y~|~\theta) ~\mathrm d\tilde y ~\mathrm dy 
,\\
\bar T_{\mathrm{lo}} &= \int T_{\mathrm{lo}}(\theta) q(\theta) ~\mathrm d\theta
,&
&T_{\mathrm{lo}}(\theta) = \int_{\mathcal Y^2} T(\tilde y) \check f(\tilde y, y~|~\theta) ~\mathrm d\tilde y ~\mathrm dy
,\\
\bar T_{\mathrm{hi,p}} &= \int T_{\mathrm{hi,p}}(\theta) q(\theta) ~\mathrm d\theta
,&
&T_{\mathrm{hi,p}}(\theta) = \int_{\mathcal Y^2} T(y) \mathbb I(\tilde y \in \Omega_\epsilon) \check f(\tilde y, y~|~\theta) ~\mathrm d\tilde y ~\mathrm dy
,\\
\bar T_{\mathrm{hi,n}} &= \int T_{\mathrm{hi,n}}(\theta) q(\theta) ~\mathrm d\theta
,&
&T_{\mathrm{hi,n}}(\theta) = \int_{\mathcal Y^2} T(y) \mathbb I(\tilde y \notin \Omega_\epsilon) \check f(\tilde y, y~|~\theta) ~\mathrm d\tilde y ~\mathrm dy
,
\end{align}
\end{subequations}
where $T(\tilde y)$ is the computational cost of simulating $\tilde y \sim \tilde f(\cdot~|~\theta)$ and where $T(y)$ is the cost of simulating $y \sim f(\cdot~|~\theta, \tilde y)$, such that $T(\tilde y)+T(y)$ is the cost of simulating $(\tilde y, y) \sim \check f(\cdot, \cdot~|~\theta)$.
\end{lemma}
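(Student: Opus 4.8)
The plan is to evaluate the three expectations $\mathbb E(w)$, $\mathbb E(w^2)$ and $\mathbb E(T)$ appearing in \Cref{def:efficiency:th} directly, by integrating the relevant quantities against the sampling density $g(z) = \check f(\tilde y, y \mid \theta)\hat q(\theta)$ identified in the proof of \Cref{MFABCValidity}, recalling $\hat q = q/Z_q$. In every case I would integrate over the uniform variable $u$ first. The two facts that make this clean are $\int_0^1 \mathbb I(u<\alpha)\,\mathrm du = \alpha$ and, since an indicator is idempotent, $\int_0^1 \mathbb I(u<\alpha)^2\,\mathrm du = \alpha$ as well; these collapse all $u$-dependence and leave integrals over $\Theta\times\mathcal Y^2$ only. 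For $\mathbb E(w)$ I can reuse the identity $\int_0^1 w\,\mathrm du = (\pi/q)\,\mathbb I(y\in\Omega_\epsilon)$ already established in \Cref{MFABCValidity}: integrating against $g$ and using $(\pi/q)\hat q = \pi/Z_q$ together with the definition of $L_\epsilon$ yields $\mathbb E(w) = Z/Z_q$, hence $\mathbb E(w)^2 = Z^2/Z_q^2$.

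The main work is $\mathbb E(w^2)$. After squaring \Cref{eq:w_mf} and integrating over $u$, the integrand becomes $(\pi/q)^2$ times an expression in the two indicators $a = \mathbb I(\tilde y\in\Omega_\epsilon)$ and $b = \mathbb I(y\in\Omega_\epsilon)$ and in $\alpha$. I would then split $\Theta\times\mathcal Y^2$ into the four regions $(a,b)\in\{0,1\}^2$. On the true-negative region the integrand vanishes; on the true-positive region it equals $(\pi/q)^2$; on the false-positive region ($a=1,b=0$, where $\alpha=\eta_1$) it equals $(\pi/q)^2(1/\eta_1 - 1)$; and on the false-negative region ($a=0,b=1$, where $\alpha=\eta_2$) it equals $(\pi/q)^2/\eta_2$. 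The one nonobvious manoeuvre is to recombine the true-positive and false-negative contributions using $\mathbb I(y\in\Omega_\epsilon) = \mathbb I(\tilde y\in\Omega_\epsilon)\mathbb I(y\in\Omega_\epsilon) + \mathbb I(\tilde y\notin\Omega_\epsilon)\mathbb I(y\in\Omega_\epsilon)$, so that the coefficient-$1$ true-positive term and the coefficient-$1/\eta_2$ false-negative term reassemble as $\mathbb I(y\in\Omega_\epsilon) + (1/\eta_2 - 1)\mathbb I(\tilde y\notin\Omega_\epsilon)\mathbb I(y\in\Omega_\epsilon)$. Integrating each region against $\check f$ and $\hat q$, and identifying the resulting integrals with $W$, $\fp W$ and $\fn W$ from \Cref{eq:PhiComponents} (each carrying a factor $1/Z_q$ inherited from $\hat q$), gives $\mathbb E(w^2) = \frac{1}{Z_q}\left(W + (1/\eta_1 - 1)\fp W + (1/\eta_2 - 1)\fn W\right)$.

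For $\mathbb E(T)$ I would write the realised cost as $T = T(\tilde y) + \mathbb I(u<\alpha)T(y)$, reflecting that the low-fidelity simulation is always performed while the high-fidelity simulation is performed only when $u<\alpha$. Integrating over $u$ turns the second term's indicator into a factor $\alpha(\theta,\tilde y)$, and substituting the piecewise-constant form \Cref{eq:constantrates} separates the high-fidelity cost into its positive and negative parts according to whether $\tilde y\in\Omega_\epsilon$. Matching against the definitions of $\bar T_{\mathrm{lo}}$, $\bar T_{\mathrm{hi,p}}$ and $\bar T_{\mathrm{hi,n}}$ then gives $\mathbb E(T) = \frac{1}{Z_q}\left(\bar T_{\mathrm{lo}} + \eta_1\bar T_{\mathrm{hi,p}} + \eta_2\bar T_{\mathrm{hi,n}}\right)$.

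Finally I would form the ratio $\mathbb E(w)^2/\left(\mathbb E(w^2)\,\mathbb E(T)\right)$; the normalising constants cancel exactly, the $Z_q^2$ in the numerator against the $1/Z_q^2$ carried jointly by the two denominator expectations, leaving $\psi = Z^2/\phi(\eta_1,\eta_2)$ with $\phi$ exactly as in \Cref{eq:Phi}. I expect the case analysis and recombination in $\mathbb E(w^2)$ to be the only genuinely delicate step, since it is there that the coefficients $(1/\eta_1 - 1)$ and $(1/\eta_2 - 1)$ and the split between $W$ and the false-classification terms emerge; the computations for $\mathbb E(w)$ and $\mathbb E(T)$, and the tracking of $Z_q$ throughout, are routine bookkeeping.
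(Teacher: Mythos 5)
Your proof is correct, and it is the direct computation the lemma calls for: the paper does not print a proof of \Cref{lemma:phi}, but your argument follows exactly the route implied by the surrounding text, extending the integrate-over-$u$-first device from the proof of \Cref{MFABCValidity} to $\mathbb E(w^2)$ and $\mathbb E(T)$. Your case analysis on the indicator pair, the coefficients $1/\eta_1 - 1$ and $1/\eta_2 - 1$, the identification of each term with $W$, $\fp W$, $\fn W$, $\bar T_{\mathrm{lo}}$, $\bar T_{\mathrm{hi,p}}$, $\bar T_{\mathrm{hi,n}}$, and the cancellation of the $Z_q$ factors in the ratio all check out.
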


\subsubsection{Optimising efficiency}
We can conclude from \Cref{lemma:phi} that the optimal continuation probabilities $(\eta_1^\star, \eta_2^\star)$ in any closed, bounded domain $\mathcal H \subseteq (0,1]^2$ are those that minimise the function $\phi(\eta_1, \eta_2)$ given in \Cref{eq:Phi}.
\Cref{etastar:unbounded,etastar:boundary} below explicitly find the global minimiser of $\phi$ over $[0,\infty)^2$ (if it exists), and then over the boundary $\partial \mathcal H$ of a rectangular domain $\mathcal H = [\rho_1, 1] \times [\rho_2, 1]$, where the user-specified lower bounds $\rho_1$ and $\rho_2$ are chosen to ensure $\mathcal H$ is closed.
These results combine in \Cref{etastar} to give the minimiser of $\phi$ over $\mathcal H$, and hence the optimal continuation probabilities for use in \Cref{MFABC:Importance}.

\begin{lemma}
\label{etastar:unbounded}
We first consider all non-negative values of $\eta_1, \eta_2 \geq 0$. 
If $W > \fp{W} + \fn{W}$, then the minimum value of $\phi(\eta_1,\eta_2)$ in \Cref{eq:Phi}, and the optimal value of $(\eta_1, \eta_2)$ in the entire positive quadrant, are given by
\begin{subequations}
\label{eq:etastar:unbounded}
\begin{align}
\bar \phi &= \left( \sqrt{(W - \fp{W} - \fn{W}) \bar T_{\mathrm{lo}}} + \sqrt{\fp{W} \bar T_{\mathrm{hi, p}}} + \sqrt{\fn{W} \bar T_{\mathrm{hi, n}}} \right)^2, \\
\left( \bar \eta_1, \bar \eta_2 \right) &= \left( 
    \sqrt{ \frac{\bar T_{\mathrm{lo}}}{W - \fp{W} - \fn{W}} \cdot \frac{\fp{W}}{\bar T_{\mathrm{hi, p}} }},
    \sqrt{ \frac{\bar T_{\mathrm{lo}}}{W - \fp{W} - \fn{W}} \cdot \frac{\fn{W}}{\bar T_{\mathrm{hi, n}} }}
\right),
\end{align}
\end{subequations}
respectively.
If $W \leq \fp{W} + \fn{W}$, then there is no minimum of $\phi(\eta_1,\eta_2)$ in $\eta_1, \eta_2 \geq 0$.
\end{lemma}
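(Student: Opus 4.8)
The efficiency satisfies $\psi = Z^2/\phi$, so maximising $\psi$ over $(\eta_1, \eta_2)$ is equivalent to minimising $\phi$ over $[0,\infty)^2$. The plan is to recognise $\phi$ as a product of two sums to which the Cauchy--Schwarz inequality applies exactly. First I would collect the constant terms in the left-hand factor of \eqref{eq:Phi}, writing
\[
W + \left(\tfrac{1}{\eta_1}-1\right)\fp{W} + \left(\tfrac{1}{\eta_2}-1\right)\fn{W}
= C + \frac{\fp{W}}{\eta_1} + \frac{\fn{W}}{\eta_2},
\qquad C := W - \fp{W} - \fn{W},
\]
so that $\phi = A\,B$ with $A = C + \fp{W}/\eta_1 + \fn{W}/\eta_2$ and $B = \bar T_{\mathrm{lo}} + \eta_1\bar T_{\mathrm{hi,p}} + \eta_2 \bar T_{\mathrm{hi,n}}$.

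Next, assuming $C \geq 0$, I would apply Cauchy--Schwarz to the triples $(\sqrt{C}, \sqrt{\fp{W}/\eta_1}, \sqrt{\fn{W}/\eta_2})$ and $(\sqrt{\bar T_{\mathrm{lo}}}, \sqrt{\eta_1 \bar T_{\mathrm{hi,p}}}, \sqrt{\eta_2 \bar T_{\mathrm{hi,n}}})$, whose squared norms are exactly $A$ and $B$. The key point is that the products of corresponding entries are independent of $(\eta_1,\eta_2)$, since $\sqrt{(\fp{W}/\eta_1)(\eta_1 \bar T_{\mathrm{hi,p}})} = \sqrt{\fp{W}\, \bar T_{\mathrm{hi,p}}}$ and similarly for the other pair. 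This yields
\[
\phi = AB \geq \left( \sqrt{C\,\bar T_{\mathrm{lo}}} + \sqrt{\fp{W}\,\bar T_{\mathrm{hi,p}}} + \sqrt{\fn{W}\,\bar T_{\mathrm{hi,n}}} \right)^2 = \bar\phi,
\]
which is precisely the claimed minimum value, and the term $\sqrt{C}=\sqrt{W - \fp{W} - \fn{W}}$ is where the hypothesis $W > \fp{W} + \fn{W}$ enters. The minimiser then comes from the equality condition, namely proportionality of the two triples, $\sqrt{\fp{W}/\eta_1} = \lambda \sqrt{\eta_1 \bar T_{\mathrm{hi,p}}}$ and its analogue, with $\lambda = \sqrt{C/\bar T_{\mathrm{lo}}}$ fixed by the first coordinate. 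Solving for $\eta_1, \eta_2$ recovers $\bar\eta_1 = \sqrt{(\bar T_{\mathrm{lo}}/C)(\fp{W}/\bar T_{\mathrm{hi,p}})}$ and $\bar\eta_2 = \sqrt{(\bar T_{\mathrm{lo}}/C)(\fn{W}/\bar T_{\mathrm{hi,n}})}$, matching \eqref{eq:etastar:unbounded}; these lie in $(0,\infty)$ exactly when $C>0$.

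The main obstacle is the non-existence claim when $W \leq \fp{W} + \fn{W}$, i.e.\ $C \leq 0$, where Cauchy--Schwarz no longer delivers an attained finite bound and I must argue directly. For $C < 0$, taking $\eta_1 = \eta_2 = \eta \to \infty$ makes $A \to C < 0$ while $B \to \infty$, so $\phi \to -\infty$ and no minimum exists. For the boundary case $C = 0$, the inequality $\phi \geq (\sqrt{\fp{W}\,\bar T_{\mathrm{hi,p}}} + \sqrt{\fn{W}\,\bar T_{\mathrm{hi,n}}})^2$ still holds, but its equality condition now forces $\lambda = 0$ and hence $\eta_1, \eta_2 \to \infty$, so the infimum is only approached in the limit and is never attained. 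I would also confirm that the competing behaviour as $\eta_i \to 0^+$ (where $\fp{W}/\eta_i$ blows up, so $A\to+\infty$ and $\phi\to+\infty$) poses no obstruction, leaving escape to infinity as the sole reason a minimiser fails to exist in either degenerate case.
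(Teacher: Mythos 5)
Your proof is correct, and the Cauchy--Schwarz decomposition with $\eta$-independent cross terms $\sqrt{\fp{W}\,\bar T_{\mathrm{hi,p}}}$ and $\sqrt{\fn{W}\,\bar T_{\mathrm{hi,n}}}$ is exactly the right mechanism: it delivers the constant lower bound $\bar\phi$, and the equality (proportionality) condition recovers the stated minimisers $(\bar\eta_1,\bar\eta_2)$. The paper states this lemma without printing a proof, so there is nothing to compare against directly, but your handling of both degenerate cases --- escape to $-\infty$ along $\eta_1=\eta_2\to\infty$ when $W<\fp{W}+\fn{W}$, and an infimum approached but never attained when $W=\fp{W}+\fn{W}$ --- is complete and correct; the only implicit assumption you share with the lemma itself is that $\fp{W}$, $\fn{W}$ and $\bar T_{\mathrm{lo}}$ are strictly positive, without which $\bar\eta_1$ or $\bar\eta_2$ degenerates to the boundary.
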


\begin{lemma}
\label{etastar:boundary} 
Under the same conditions as \Cref{etastar:unbounded}, fix the closed region $\mathcal H = [\rho_1, 1] \times [\rho_2, 1]$ of positive continuation probabilities with user-defined lower bounds $\rho_1, \rho_2 \in (0,1)$.
Define the two functions, for $x>0$,
\begin{subequations}
\label{eq:etastar:boundary}
\begin{align}
\eta_1(x) &= \max \left\{ \rho_1, ~\min \left[ 1, 
\sqrt{\frac{\bar T_{\mathrm{lo}} + \bar T_{\mathrm{hi}, \mathrm{n}} x }{W - \fp{W} - (1-x^{-1})\fn{W}} \cdot \frac{\fp{W}}{\bar T_{\mathrm{hi}, \mathrm p}}} ~
\right] \right\}, \\
\eta_2(x) &= \max \left\{ \rho_2, ~\min \left[ 1, 
\sqrt{\frac{\bar T_{\mathrm{lo}} + \bar T_{\mathrm{hi}, \mathrm{p}} x }{W - (1-x^{-1}) \fp{W} - \fn{W}} \cdot \frac{\fn{W}}{\bar T_{\mathrm{hi}, \mathrm n}}} ~
 \right] \right\}.
\end{align}
\end{subequations}
Then the minimum value of $\phi$ on the boundary, $\partial \mathcal H$, of $\mathcal H$ is attained at the minimum of $\phi(1, \eta_2(1))$, $\phi(\eta_1(1), 1)$, $\phi(\rho_1, \eta_2(\rho_1))$ or $\phi(\eta_1(\rho_2), \rho_2)$.
\end{lemma}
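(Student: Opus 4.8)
The plan is to exploit the product structure of $\phi$ in \Cref{eq:Phi} and reduce the minimisation on each edge of $\partial\mathcal{H}$ to a one-dimensional convex problem. The boundary $\partial\mathcal{H}$ is a union of four segments, on each of which one continuation probability is held fixed at a value $x$ while the other ranges over its interval. First I would fix $\eta_2 = x$ and treat $\phi(\cdot, x)$ as a function of the free variable $\eta_1 = s$. Collecting the terms that do not involve $s$, the first factor becomes $a + \fp W / s$ with $a = W - \fp W - (1 - x^{-1})\fn W$, and the second becomes $b + c s$ with $b = \bar T_{\mathrm{lo}} + x \bar T_{\mathrm{hi,n}}$ and $c = \bar T_{\mathrm{hi,p}}$. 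Thus $\phi(\cdot,x)$ has the form $(a + \fp W/s)(b + cs)$, which expands to $ab + \fp W c + ac\,s + \fp W b/s$.

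The second step is to establish convexity and locate the interior critical point. Differentiating gives $\partial_s \phi = ac - \fp W b / s^2$ and $\partial_s^2 \phi = 2 \fp W b / s^3 > 0$ for $s > 0$, since $\fp W$ and $b$ are integrals of non-negative integrands and $\bar T_{\mathrm{lo}} > 0$ forces $b > 0$. Hence $\phi(\cdot,x)$ is convex on $s > 0$, and its unconstrained minimiser solves $ac = \fp W b/s^2$, namely $s^\star = \sqrt{(b/a)(\fp W/c)}$, which is exactly the argument of the outer $\min/\max$ in the definition of $\eta_1(x)$. Here I would invoke the standing hypothesis $W > \fp W + \fn W$ from \Cref{etastar:unbounded}: since $x \le 1$ gives $x^{-1} \ge 1$, we have $-(1 - x^{-1})\fn W \ge 0$, so $a \ge W - \fp W > \fn W \ge 0$ and thus $a > 0$, making $s^\star$ real and positive.

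The third step passes from the unconstrained minimiser to the constrained interval. Because $\phi(\cdot,x)$ is convex with unique unconstrained minimiser $s^\star$, its minimiser over the closed interval $[\rho_1, 1]$ is the projection $\max\{\rho_1, \min[1, s^\star]\} = \eta_1(x)$; the clipping automatically returns a segment endpoint whenever $s^\star$ lies outside $[\rho_1,1]$ (and correctly handles the degenerate case $\fp W = 0$, in which $\phi(\cdot,x)$ is increasing and $s^\star = 0$). An identical argument with the two variables exchanged shows that, fixing $\eta_1 = x \le 1$, the minimiser of $\phi(x,\cdot)$ over $[\rho_2,1]$ is $\eta_2(x)$. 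Applying these to the four segments $\eta_2 = 1$, $\eta_2 = \rho_2$, $\eta_1 = 1$ and $\eta_1 = \rho_1$ yields the four candidate values $\phi(\eta_1(1),1)$, $\phi(\eta_1(\rho_2),\rho_2)$, $\phi(1,\eta_2(1))$ and $\phi(\rho_1,\eta_2(\rho_1))$. Since these segments cover $\partial\mathcal{H}$, with the corners captured by the clipping at the shared endpoints, the minimum of $\phi$ on $\partial\mathcal{H}$ is the least of these four values, as claimed.

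I expect the main obstacle to be the bookkeeping of signs: the whole argument hinges on the coefficients $a$, $b$, $c$ and $\fp W$ being non-negative, with the relevant ones strictly positive, so that each one-dimensional restriction is convex with a finite interior critical point. The positivity of $a$ is the only delicate case, and it is precisely where the hypothesis $W > \fp W + \fn W$ together with the constraint $x \le 1$ is required; once this is secured, the remaining computation is routine.
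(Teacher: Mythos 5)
Your proof is correct and follows the argument that the lemma's statement is built around: on each of the four edges of $\partial\mathcal H$ the restriction of $\phi$ from \Cref{eq:Phi} has the form $(a + \fp{W}/s)(b+cs)$ (or its analogue in $\eta_2$), which is convex on $s>0$ with positive leading coefficient $a$ guaranteed by $W > \fp{W}+\fn{W}$ and $x\leq 1$, so the constrained minimiser is the projection of the interior critical point onto the edge, i.e.\ exactly $\eta_1(x)$ or $\eta_2(x)$, and taking the least of the four edge minima gives the claim. Your sign bookkeeping, including the degenerate case $\fp{W}=0$, is sound; no gaps.
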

 
\begin{proposition}
\label{etastar}
 Assume the same conditions as \Cref{etastar:unbounded,etastar:boundary}. 
 Compute the minimiser, $(\bar \eta_1, \bar \eta_2)$, and minimal value, $\bar \phi$, of $\phi$ in $(0,\infty)^2$ using \Cref{etastar:unbounded}, if they exist.
 If $(\bar \eta_1, \bar \eta_2) \in \mathcal H$ then set $(\eta_1^\star, \eta_2^\star) = (\bar \eta_1, \bar \eta_2)$ and $\phi^\star = \bar \phi$. 
 Otherwise, set $\phi^\star$ equal to the minimum of the four values of $\phi$ listed in \Cref{etastar:boundary}, and $(\eta_1^\star, \eta_2^\star)$ to the associated argument.
Then $\phi^\star$ is the minimum value of $\phi$ over $(\eta_1, \eta_2) \in \mathcal H$, and $(\eta_1^\star, \eta_2^\star)$ are the minimising continuation probabilities.
\end{proposition}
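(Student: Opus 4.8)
The plan is to treat this as a routine constrained-optimisation argument that stitches together the two preceding lemmas by way of the extreme value theorem. Since $\rho_1, \rho_2 \in (0,1)$, the domain $\mathcal H = [\rho_1,1]\times[\rho_2,1]$ is compact and bounded away from the coordinate axes, where the $1/\eta_1$ and $1/\eta_2$ terms in \Cref{eq:Phi} become singular. Hence $\phi$ is continuous (indeed smooth) on $\mathcal H$, and the extreme value theorem guarantees that its minimum over $\mathcal H$ is attained at some $(\eta_1^\star, \eta_2^\star) \in \mathcal H$. First I would split into two exhaustive cases: either this minimiser lies in the open interior $\mathrm{int}\,\mathcal H = (\rho_1,1)\times(\rho_2,1)$, or it lies on the boundary $\partial\mathcal H$.

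For the interior case I would invoke the first-order optimality conditions: a minimiser in $\mathrm{int}\,\mathcal H$ is necessarily a stationary point of the smooth function $\phi$, that is, $\nabla\phi = 0$ there. \Cref{etastar:unbounded} establishes that, when $W > \fp W + \fn W$, the function $\phi$ has exactly one stationary point in the open quadrant $(0,\infty)^2$, namely $(\bar\eta_1,\bar\eta_2)$, and that this point is the global minimiser of $\phi$ over the whole quadrant with value $\bar\phi$; when $W \le \fp W + \fn W$, no such stationary point (and no global minimum) exists. Consequently, an interior minimiser over $\mathcal H$ can occur only if $(\bar\eta_1,\bar\eta_2)$ exists and lies in $\mathcal H$, in which case it must equal $(\bar\eta_1,\bar\eta_2)$. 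Conversely, whenever $(\bar\eta_1,\bar\eta_2)\in\mathcal H$, the inequality $\min_{\mathcal H}\phi \ge \min_{(0,\infty)^2}\phi = \bar\phi = \phi(\bar\eta_1,\bar\eta_2)$, combined with $\mathcal H \subseteq (0,\infty)^2$, forces $\min_{\mathcal H}\phi = \bar\phi$, attained at $(\bar\eta_1,\bar\eta_2)$. This is exactly the first branch of the proposition.

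It remains to handle the case $(\bar\eta_1,\bar\eta_2)\notin\mathcal H$, which by convention includes the regime where the unconstrained minimiser fails to exist. By the stationarity argument above, no interior point of $\mathcal H$ can then be the minimiser, so the minimiser must lie on $\partial\mathcal H$. Restricting $\phi$ to the boundary is precisely the edgewise optimisation carried out in \Cref{etastar:boundary}, which shows that the minimum of $\phi$ over $\partial\mathcal H$ is attained at one of the four explicit candidates $\phi(1,\eta_2(1))$, $\phi(\eta_1(1),1)$, $\phi(\rho_1,\eta_2(\rho_1))$, $\phi(\eta_1(\rho_2),\rho_2)$. Setting $\phi^\star$ to the smallest of these four values and $(\eta_1^\star,\eta_2^\star)$ to the associated argument therefore returns the constrained minimiser, which is the second branch. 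Combining the two branches yields the claimed characterisation.

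I expect the only genuinely delicate point to be the uniqueness of the interior stationary point, which is what rules out spurious interior critical points and legitimises the clean dichotomy; since this is supplied by \Cref{etastar:unbounded}, here I would use it as a black box. A secondary point to state carefully is that when the unconstrained minimiser does not exist (the regime $W \le \fp W + \fn W$), the same stationarity argument still forbids an interior minimiser, so the boundary analysis of \Cref{etastar:boundary} covers this regime without modification.
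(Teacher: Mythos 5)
The paper does not print a proof of \Cref{etastar} (only \Cref{MFABCValidity} carries an explicit proof in the source), so I can only judge your argument against the proof the two preceding lemmas are clearly designed to support — and yours is exactly that argument: continuity of $\phi$ on the compact set $\mathcal H$, which is bounded away from the singular axes, gives existence of a minimiser by the extreme value theorem; an interior minimiser must be a critical point and hence coincide with $(\bar\eta_1,\bar\eta_2)$, handled by \Cref{etastar:unbounded}; otherwise the minimiser lies on $\partial\mathcal H$ and \Cref{etastar:boundary} reduces the search to the four listed candidates. The converse direction in the first branch (if $(\bar\eta_1,\bar\eta_2)\in\mathcal H$ then $\min_{\mathcal H}\phi=\bar\phi$ because $\mathcal H\subseteq(0,\infty)^2$) is also handled correctly.

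The one point where you lean harder on \Cref{etastar:unbounded} than its statement strictly supports is the claim that it "establishes that $\phi$ has exactly one stationary point in the open quadrant" (and none when $W \leq \fp{W} + \fn{W}$). As stated, the lemma only identifies the global minimiser and asserts non-existence of a minimum in the degenerate case; a unique global minimiser does not in general preclude other critical points, and a spurious interior local minimum of $\phi|_{\mathcal H}$ is precisely what would break your second branch. You flag this yourself, which is the right instinct, but it deserves a line of justification rather than black-box citation: writing $\phi = \bigl(c_0 + c_1/\eta_1 + c_2/\eta_2\bigr)\bigl(d_0 + d_1\eta_1 + d_2\eta_2\bigr)$ with $c_0 = W - \fp{W} - \fn{W}$, the conditions $\partial\phi/\partial\eta_1 = \partial\phi/\partial\eta_2 = 0$ force $\eta_1^2 d_1/c_1 = \eta_2^2 d_2/c_2$, i.e.\ confine any critical point to a single ray, along which $\phi$ has the form $c_0 d_0 + CD + c_0 D\eta + C d_0/\eta$ and therefore has exactly one critical point if $c_0>0$ and none if $c_0\leq 0$. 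With that line added, your proof is complete and is the natural route to the result.
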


\subsubsection{Interpreting efficiency}
The optimised efficiency of \Cref{MFABC:Importance} is determined by the values of the various coefficients defined in \Cref{eq:PhiComponents}.
The normalisation constant, $Z$, and the ABC approximation to the likelihood, $L_\epsilon$, are properties of the ABC approach as defined in \Cref{eq:ABC}.
The coefficient $W$ can be written $W = Z \mathbb E_{\pi}(p_\epsilon/q)$, as a scaling of the prior expectation of the ratio $p_\epsilon / q$.
Larger values of $W$ require $q$ to be more concentrated (relative to $\pi$) in regions of high posterior density, which is a well-known characteristic of importance sampling~\cite{Owen2013}.
Thus, neither $Z$ nor $W$ relate specifically to the multifidelity approach and are properties of ABC importance sampling.

The coefficient $\bar T_{\mathrm{lo}}$ represents the average time taken to simulate $\tilde y$ from the low-fidelity model.
The other two time-based coefficients, $\bar T_{\mathrm{hi,p}}$ and $\bar T_{\mathrm{hi,n}}$, represent the average time taken to complete the simulation of $(\tilde y, y) \sim \check f$ from the coupling, conditional on whether or not $\tilde y \in \Omega_\epsilon$.
These coefficients therefore determine how much time can be saved by avoiding expensive simulations and stopping after generating $\tilde y \sim \tilde f(\cdot~|~\theta)$.

The key determinants of the success of the multifidelity technique are $\fp W$ and $\fn W$ and their tradeoff between the high-fidelity simulation costs, $\bar T_{\mathrm{hi,p}}$ and $\bar T_{\mathrm{hi,n}}$.
\Cref{eq:Phi} implies that the marginal cost to the efficiency of decreasing $\eta_1$ and $\eta_2$ is smaller for smaller values of $\fp W$ and $\fn W$.
These coefficients can be written as the two prior expectations,
\[
 \fp W 
 = 
 \mathbb E_\pi \left( \frac{\pi}{q} ~\fp p \right),
 \quad
 \fn W 
 = 
 \mathbb E_\pi \left( \frac{\pi}{q} ~\fn p \right).
\]
Thus, they are scalings of the probabilities of a false positive (where $\tilde y \in \Omega_\epsilon$ but $y \notin \Omega_\epsilon$) and a false negative (where $\tilde y \notin \Omega_\epsilon$ but $y \in \Omega_\epsilon$), respectively.
Hence, small values of $\fp W$ and $\fn W$ correspond to at least one of the following cases.
First, if the low-fidelity and high-fidelity models are closely correlated, then the probability of a false positive or false negative is small.
This demonstrates the value of coupling the models, as described in \Cref{s:MFABC}.
Second, for small values of $\fp W$ and $\fn W$ we require $q$ to be larger than $\pi$ in regions of parameter space where false positives or false negatives are relatively likely.
We can intepret this as a requirement that (in addition to $q$ being concentrated in regions of high posterior density) the region of parameter space where simulations of the low-fidelity and high-fidelity model are less often in agreement (in terms of membership of $\Omega_\epsilon$) should be explored more thoroughly by $q$ than by $\pi$.

\subsection{Constructing continuation probabilities}
\label{s:eta:MC}
There is an important barrier to implementing \Cref{etastar} as a method for choosing optimal continuation probabilities.
Before running any ABC iterations, and in the absence of extensive analysis of the models being simulated, the quantities in \Cref{eq:PhiComponents} are unknown.
We therefore cannot directly construct the optimisers in \Cref{eq:etastar:unbounded,eq:etastar:boundary}.
However, recall that we are aiming to use this method in the context of sequential Monte Carlo to adaptively produce continuation probabilities.
In \Cref{MFABC:SMCalpha}, at each generation $t \geq 1$, we have a sample $\{ \theta_n^{(t)}, w_n^{(t)} \}$ that is used to produce $\hat r_{t+1}$.
The proposed approach to constructing continuation probabilities is similar: we will use the same Monte Carlo sample to also produce approximately optimal values of $\eta_1$ and $\eta_2$ defining the continuation probability $\alpha_{t+1}$ to use at generation $t+1$.
The following definition specifies how to calculate approximations of the quantities in \Cref{eq:PhiComponents} using an existing Monte Carlo sample.
These approximations can then be substituted into \Cref{eq:etastar:unbounded,eq:etastar:boundary}.
Hence, we can estimate the optimal continuation probabilities as given by \Cref{etastar}.

\begin{definition}
\label{def:MonteCarlo}
Consider a Monte Carlo sample $\left\{ \theta_n, w_n \right\}_{n=1}^{N}$ constructed from a run of \Cref{MFABC:Importance}, which used the importance distribution $\hat q(\theta)$ proportional to $q(\theta)$ and continuation probability $\alpha(\theta, \tilde y)$.
For the set of low-fidelity simulations $\{ \tilde y_n \}_{n=1}^{N}$ and high-fidelity simulations $\{ y_n \}_{n \in M}$, where $M = \{ n~:~ y_n \text{ exists} \}$, store: the simulation times, $\tilde t_n = T(\tilde y_n)$ and $t_n = T(y_n)$; the distances from data, $\tilde d_n = d(\tilde y_n, \obs y)$ and $d_n = d(y_n, \obs y)$; the importance densities $q_n = q(\theta_n)$; and the continuation probabilities $\alpha_n = \alpha(\theta_n, \tilde y_n)$.

We now consider finding the optimal continuation probability, $\alpha^\star$, to be used in a new run of \Cref{MFABC:Importance}, with the new importance distribution, $q^\star(\theta)$, and the new ABC threshold, $\epsilon$.
We define the Monte Carlo estimates,
\begin{subequations}
\label{eq:MonteCarlo}
\begin{align}
 \hat Z &= \frac{1}{N} \left[ \sum_{n=1}^N \frac{\pi(\theta_n)}{q_n} \mathbb I(\tilde d_n < \epsilon) + \sum_{n \in M} \frac{\pi(\theta_n)}{q_n \alpha_n} \left( \mathbb I(d_n < \epsilon) - \mathbb I(\tilde d_n < \epsilon)\right) \right]
 \label{eq:Zhat}
, \\
 \hat{W} &= \frac{1}{N} \left[ \sum_{n=1}^N \frac{\pi(\theta_n)^2}{q^\star(\theta_n) q_n} \mathbb I(\tilde d_n < \epsilon) + \sum_{n \in M} \frac{\pi(\theta_n)^2}{q^\star(\theta_n) q_n \alpha_n} \left( \mathbb I(d_n < \epsilon) - \mathbb I(\tilde d_n < \epsilon)\right) \right]
 \label{eq:MC_tpfn}
, \\ \fp{\hat{W}} &= \frac{1}{N} \sum_{n \in M} \frac{ \pi(\theta_n)^2}{q^\star(\theta_n) q_n \alpha_n} \mathbb I (\tilde d_n < \epsilon) \mathbb I (d_n \geq \epsilon)
 \label{eq:MC_fp}
, \\ \fn{\hat{W}} &= \frac{1}{N} \sum_{n \in M} \frac{ \pi(\theta_n)^2}{q^\star(\theta_n) q_n \alpha _n} \mathbb I (\tilde d_n \geq \epsilon) \mathbb I (d_n < \epsilon)
 \label{eq:MC_fn}
, \\ \hat T_{\mathrm{lo}} &= \frac{1}{N} \sum_{n=1}^{N} \frac{q^\star(\theta_n)}{q_n} \tilde t_n
 \label{eq:T_lo}
, \\ \hat T_{\mathrm{hi}, \mathrm p} &= \frac{1}{N} \sum_{n \in M} \frac{q^\star(\theta_n)}{q_n \alpha_n} \mathbb I(\tilde d_n < \epsilon) t_n
 \label{eq:T_hi_p}
, \\ \hat T_{\mathrm{hi}, \mathrm n} &= \frac{1}{N} \sum_{n \in M} \frac{q^\star(\theta_n)}{q_n \alpha_n} \mathbb I(\tilde d_n \geq \epsilon) t_n
 \label{eq:T_hi_n}
,
\end{align}
\end{subequations}
corresponding to the quantities in \Cref{eq:PhiComponents}.
\end{definition}

The estimates in \Cref{eq:MonteCarlo} are scaled Monte Carlo estimates of the quantities in \Cref{eq:PhiComponents}, such that the approximation
\[
\psi(\eta_1,\eta_2) = \frac{Z^2}{\phi(\eta_1,\eta_2)} \approx \frac{\hat Z^2}{\hat \phi(\eta_1,\eta_2)},
\]
holds, with
\begin{equation}
\label{eq:PhiHat}
\hat \phi(\eta_1, \eta_2) = 
\left(  \hat W + \left( \frac{1}{\eta_1} - 1 \right) \fp{\hat W} + \left( \frac{1}{\eta_2} - 1 \right) \fn{\hat W} \right)
\left( \hat{T}_{\mathrm{lo}}
 + \eta_1 \hat T_{\mathrm{hi}, \mathrm p}
 + \eta_2 \hat T_{\mathrm{hi}, \mathrm n} \right).
\end{equation}
Thus, we can substitute the estimates in \Cref{eq:MonteCarlo} into \Cref{eq:etastar:unbounded,eq:etastar:boundary}.
Applying \Cref{etastar} with these estimates thus provides near-optimal continuation probabilities for the new run of \Cref{MFABC:Importance}, constructed from the existing Monte Carlo sample.

\begin{note}
The Monte Carlo estimates in \Cref{eq:MonteCarlo} are not independent of each other, so there is a bias in the approximation $\hat Z^2 / \hat \phi$.
Hence, the continuation probabilities $(\eta_1^\star, \eta_2^\star)$ returned by \Cref{etastar}, if using the estimates in \Cref{eq:MonteCarlo}, can only be near-optimal. 
\end{note}

\subsection{Adaptive MF-ABC-SMC algorithm}
\label{s:MFABC:SMC}

\Cref{MFABC:SMCalpha} presents a multifidelity ABC-SMC algorithm that relies on a predetermined sequence of continuation probability functions, $\alpha_t(\theta, \tilde y)$.
In \Cref{s:eta,s:eta:MC} we have shown how to use an existing Monte Carlo sample to produce continuation probabilities of the form
\[
\alpha_t(\theta, \tilde y) = \eta_1 \mathbb I(\tilde y \in \Omega_{\epsilon_t}) + \eta_2 \mathbb I(\tilde y \notin \Omega_{\epsilon_t}),
\]
where the values of $\eta_1$ and $\eta_2$ are chosen according to \Cref{etastar}, in order to (approximately) optimise the efficiency of generating a sample from \Cref{MFABC:Importance}.
This result allows us to write an adaptive MF-ABC-SMC algorithm that uses the Monte Carlo output of \Cref{MFABC:Importance} at generation $t$ to construct not just an importance distribution, $\hat r_{t+1}(\theta)$, but also a continuation probability, $\alpha_{t+1}(\theta, \tilde y)$, for use in the next generation.

\Cref{MFABC:SMC} (MF-ABC-SMC) is an adaptive multifidelity sequential Monte Carlo algorithm for ABC parameter inference.
In place of the pre-defined continuation probabilities $\alpha_t$ used in \Cref{MFABC:SMCalpha} (MF-ABC-SMC-$\alpha$), we instead only require an initial continuation probability, most sensibly set to $\alpha_1 \equiv 1$, and the two lower bounds, $\rho_1$ and $\rho_2$, on the allowed values of the continuation probabilities.
Since the final sample is generated by a run of \Cref{MFABC:Importance}, by \Cref{MFABCValidity} it follows that the weighted sample is from the ABC posterior, $p_\epsilon(\theta~|~\obs y)$.

\begin{algorithm}
\caption{Multifidelity ABC-SMC (MF-ABC-SMC)}
\label{MFABC:SMC}
\begin{algorithmic}[1]
\REQUIRE{
Data $\obs y$;
sequence of nested neighbourhoods $\Omega_{\epsilon_T} \subseteq \Omega_{\epsilon_{T-1}} \subseteq \cdots \subseteq \Omega_{\epsilon_1}$ for $0 < \epsilon = \epsilon_T < \epsilon_{T-1} < \dots < \epsilon_1$; 
prior $\pi$; 
coupling $\check f(\cdot, \cdot~|~\theta)$ of models $f(\cdot~|~\theta)$ and $\tilde f(\cdot~|~\theta)$;
initial importance distribution $\hat r_1$ (often set to $\pi$); 
initial continuation probability $\alpha_1 \equiv 1$;
lower bounds on continuation probabilities, $\rho_1,\rho_2 \in (0,1)$; 
perturbation kernels $K_t(\cdot~|~\theta)$; 
stopping conditions $S_t$;
where $t=1,\dots,T$.
}
\ENSURE{
Weighted sample $\{\theta_n^{(T)}, w_n^{(T)} \}_{n=1}^{N_T}$.
}
\FOR{$t = 1, \dots, T-1$}
 \STATE{Produce $\{ \theta_n^{(t)}, w_n^{(t)}  \}_{n=1}^{N_t}$ from \Cref{MFABC:Importance} (MF-ABC-IS), using the neighbourhood $\Omega_{\epsilon_t}$, continuation probability $\alpha_t$, importance distribution $\hat r_t$, and stopping condition $S_t$.
 Store simulation times, distances, importance densities and continuation probabilities as specified in \Cref{def:MonteCarlo}.}
 \STATE{Set $\hat w_n^{(t)} = | w_n^{(t)} |$ for $n=1,\dots,N_t$.}
 \STATE{Define $\hat r_{t+1}$ proportional to $r_{t+1}$ given in \Cref{eq:new_importance} }
 \STATE{Update the estimates in \Cref{eq:MonteCarlo} with the values stored at step 2, using importance distribution $r_{t+1}(\theta)$ and ABC threshold $\epsilon_{t+1}$.}
 \STATE{Calculate $(\eta_1^\star, \eta_2^\star)$ using \Cref{etastar} with lower bounds $\rho_1, \rho_2$.}
 \STATE{Set $\alpha_{t+1}(\tilde y, \theta) = \eta_1^\star \mathbb I(\tilde y \in \Omega_{\epsilon}) + \eta_2^\star \mathbb I(\tilde y \notin \Omega_{\epsilon})$. }
\ENDFOR{}
\STATE{Produce $\{ \theta_n^{(T)}, w_n^{(T)} \}_{n=1}^{N_T}$ from \Cref{MFABC:Importance}, using neighbourhood $\Omega_{\epsilon_T}$, continuation probability $\alpha_T$, importance distribution $\hat r_T$ and stopping condition $S_T$.}

\end{algorithmic}
\end{algorithm}

\begin{note}
In common with the non-adaptive \Cref{MFABC:SMCalpha} (MF-ABC-SMC-$\alpha$), the importance weights $r_t(\theta_n^{(t)})$ are required to construct the weights $w_n^{(t)}$ in step 2, each of which is an $O(N)$ calculation.
However, each calculation of $r_{t+1}(\theta_n^{(t)})$ required in step 5 is also $O(N)$.
Thus, there is extra cost at each generation on the order $O(N^2)$.
However, in common with \Cref{Note:ON2}, we will assume that this cost is justified by our aim to reduce the large simulation burden that dominates the algorithm's run time.
\end{note}

\section{Example: Kuramoto Oscillator Network}
\label{s:Example}

To demonstrate the multifidelity and SMC approaches to parameter inference, we will infer the parameters of a Kuramoto oscillator model on a complete network, with stochastic heterogeneity in each node's intrinsic frequency.
In \Cref{s:existing} we consider the performance of the previously developed ABC algorithms introduced in \Cref{s:Background} (ABC-RS, ABC-SMC and MF-ABC-RS) and demonstrate the orthogonal ways in which the SMC and multifidelity techniques improve performance.
In \Cref{s:Results} we apply the adaptive algorithm, \Cref{MFABC:SMC} (MF-ABC-SMC) to demonstrate that the efficiency of parameter estimation is significantly improved by combining the multifidelity and SMC approaches.
The algorithms have been implemented in Julia~\cite{Julia} and the source code can be found at \url{github.com/tpprescott/mf-abc-smc}.

The Kuramoto oscillator model is defined on a complete network of $M$ nodes, where each node, $i$, has a dynamically evolving phase value, $\phi_i$, determined by the ordinary differential equation
\begin{equation}
 \label{eq:Kuramoto_hi}
 \dot \phi_i  = \omega_i + \frac{K}{M} \sum_{j=1}^{M} \sin \left( \phi_j - \phi_i \right),
\end{equation}
for $i=1,\dots,M$.
Each constant $\omega_i$, the intrinsic angular velocity, is an independent draw from a Cauchy distribution with median $\omega_0$ and dispersion parameter $\gamma$. 
In addition to these two parameters, we have an interconnection strength $K$.
Simulations of the ODE system are run over a fixed time interval $t \in [0,T]$, and we will assume fixed initial conditions $\phi_i(0)=0$ for all $i$.

The multifidelity approach makes use of a low-dimensional approximation of the coupled oscillator dynamics, as described in~\cite{Hannay2018,Ott2008,Ott2009}.
The approximation is based on tracking the Daido order parameters, which are a set of complex-valued representations of the high-dimensional vector $(\phi_i)_{i=1}^M$, defined as
\[
 Z_n(t) = \frac{1}{M} \sum_{j=1}^M \exp( i n \phi_j),
\]
for positive integers $n$ and the imaginary unit $i$.
A system of coupled ODEs can be generated for the set of $Z_n$.
Under the assumption that $Z_n(t) = Z_1(t)^n$, known as the Ott-Antonsen ansatz~\cite{Hannay2018,Ott2008}, the system can be reduced to a single ODE for $Z_1$, which is known as the Kuramoto parameter.
This complex-valued trajectory is usually represented by two real trajectories, corresponding to its magnitude $R(t) = \| Z_1(t)) \|$ and phase $\Phi(t) = \arg(Z_1(t))$.
The approximation of the $M$-dimensional ODE system in \Cref{eq:Kuramoto_hi} under the OA ansatz is thus given by the two-dimensional ODE system
\begin{subequations}
\label{eq:Kuramoto_lo}
\begin{align}
 \dot{\tilde R} &= \left( \frac{K}{2} - \gamma \right) \tilde R - \frac{K}{2} \tilde R^3, \\
 \dot{\tilde \Phi} &= \omega_0,
\end{align}
\end{subequations}
with initial conditions $(\tilde R(0), \tilde \Phi(0)) = (1, 0)$, which directly simulates the low-dimensional representation of the $M$-dimensional state vector.

The goal of this example is to infer the parameters $(K, \omega_0, \gamma)$ based on synthetic data, generated by simulating a system of $M=256$ oscillators with random angular velocities $\omega_i$ over $t \in (0, 30]$.
We record the trajectories $\obs R(t)$ and $\obs \Phi(t)$ of the magnitude and phase of the Kuramoto parameter.
The parameter values used to generate these data are $(K=2,~\omega_0 = \pi/3,~\gamma=0.1)$.
The likelihood of the observed data under the model in \Cref{eq:Kuramoto_hi} with stochastic parameters $\omega_i \sim \mathrm{Cauchy}(\omega_0, \gamma)$ is unavailable, and we must therefore resort to ABC inference, requiring repeated simulation.

\begin{figure}
    \centering
    \subfloat[Magnitude: $R(t)$]{\includegraphics[width=0.45\textwidth]{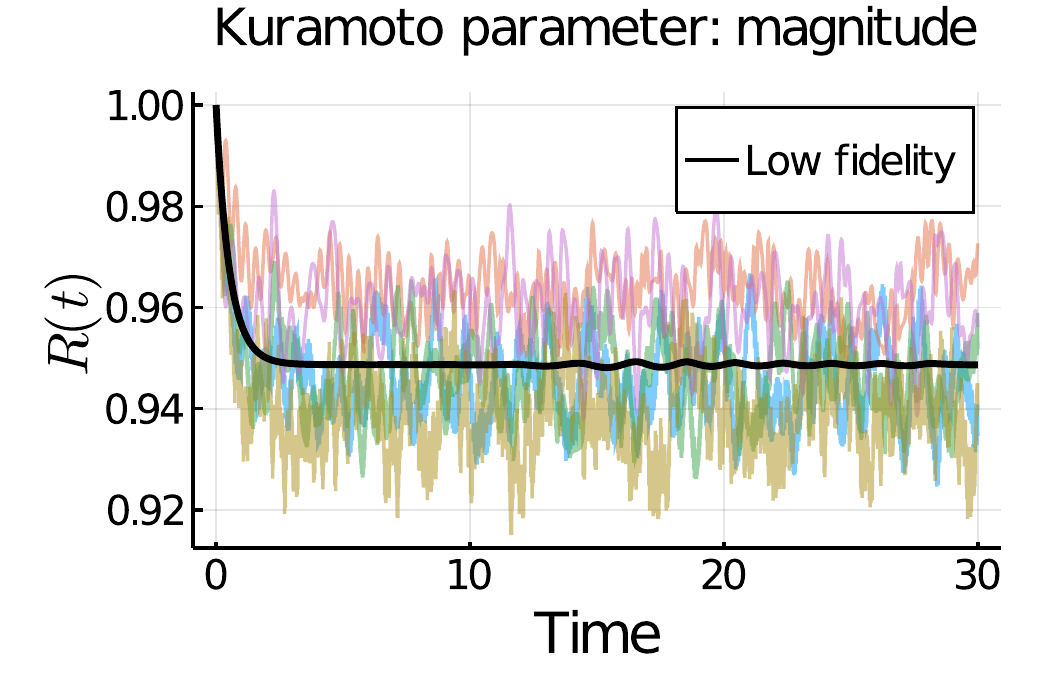}}~
    \subfloat[Phase: $\Phi(t)$]{\includegraphics[width=0.45\textwidth]{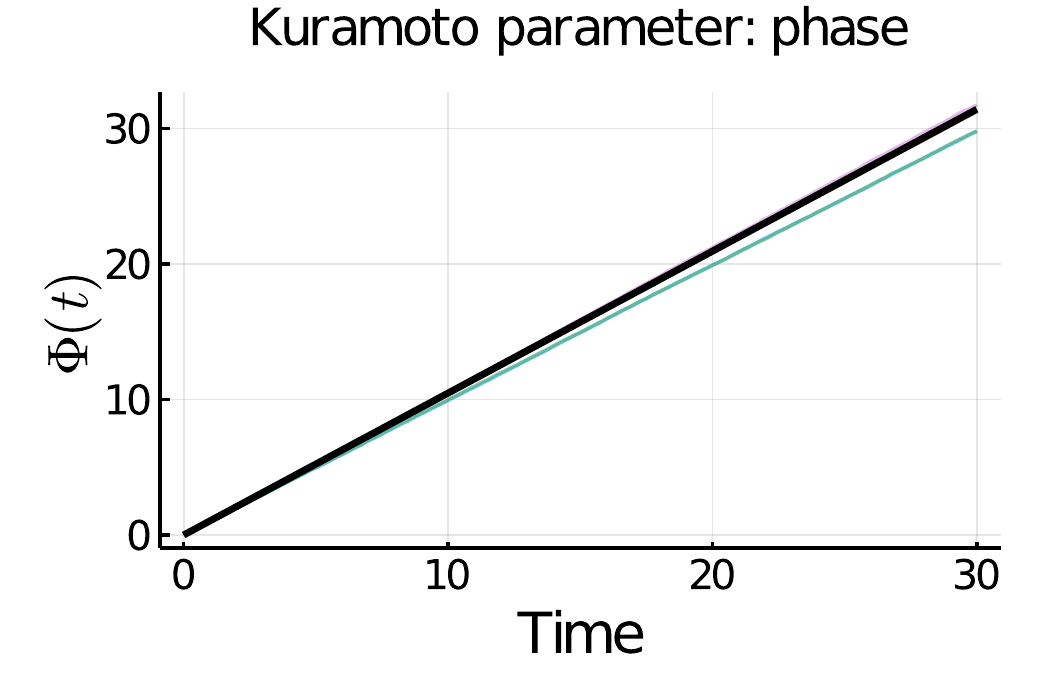}}
    \caption{
    In colour are five trajectories for $R(t)$ and $\Phi(t)$ for the high-fidelity model in \Cref{eq:Kuramoto_hi}. Stochasticity arises from sampling $\omega_i \sim \mathrm{Cauchy}(\omega_0, \gamma)$ for $i = 1,\dots, 256$. 
    In black is the deterministic trajectory from the low-fidelity model in \Cref{eq:Kuramoto_lo}.
    All simulations were completed using parameters $K=2$, $\omega_0 = \pi/3$ and $\gamma=0.1$.
    Note that the phase plot ``unwraps'' the trajectories of $\Phi(t) = \arg(Z_1(t))$ to avoid $2\pi$-discontinuities. 
    }
    \label{fig:eg_dynamics}
\end{figure}

Example trajectories of the high-fidelity and low-fidelity models in \Cref{eq:Kuramoto_hi,eq:Kuramoto_lo} are given in \Cref{fig:eg_dynamics}.
The trajectories $\obs R(t)$, $\obs \Phi(t)$, $R(t)$, $\Phi(t)$, $\tilde R(t)$ and $\tilde \Phi(t)$ on $t \in [0,30]$ are infinite dimensional.
In order to easily compare trajectories, we will select a finite number of informative \emph{summary statistics} from the trajectories, guided by the approximated system in \Cref{eq:Kuramoto_lo}.
We take
\begin{align*}
S_1(R, \Phi) &= \left( \frac{1}{30} \int_0^{30} R(t) ~\mathrm dt \right)^2, \\
S_2(R, \Phi) &= \frac{1}{30} \left( \Phi(30) - \Phi(0) \right), \\
S_3(R, \Phi) &= R \left( T_{1/2} \right),
\end{align*}
where $T_{1/2}$ is the first value of $t \in [0,30]$ for which $\obs R(t)$ is halfway between $\obs R(0)=1$ and its average value $S_1(\obs R, \obs \Phi)^{1/2}$.
Justification for the choice of these summary statistics and distances is provided in \Cref{appendix:summary_statistics}.

Simulation of the high-fidelity model produces $y \sim f(\cdot~|~(K, \omega_0, \gamma))$ by: 
(a) generating $\omega_i$, $i=1,\dots,256$, from $\mathrm{Cauchy}(\omega_0, \gamma)$; then 
(b) simulating the ODE system in \Cref{eq:Kuramoto_hi}; then 
(c) computing $y = (S_1(R, \Phi), S_2(R, \Phi), S_3(R, \Phi))$.
Simulation of the low-fidelity model produces $\tilde y \sim \tilde f(\cdot~|~(K, \omega_0, \gamma))$ by: 
(a) simulating the ODE system in \Cref{eq:Kuramoto_lo}; then
(b) computing $\tilde y = (S_1(\tilde R, \tilde \Phi), S_2(\tilde R, \tilde \Phi), S_3(\tilde R, \tilde \Phi))$.
The distances $d(y, \obs y)$ and $d(\tilde y, \obs y)$ are defined according to the weighted Euclidean norm, $d(a,b)^2 = 4(a_1 - b_1)^2 + (a_2 - b_2)^2 + (a_3 - b_3)^2$.
Note that the low-fidelity model is deterministic.
Therefore there is no meaningful definition of a coupling between the models at each fidelity: any simulation $y \sim f(\cdot~|~\tilde y,~(K, \omega_0, \gamma)) = f(\cdot~|~(K, \omega_0, \gamma))$ from the high-fidelity model will be independent of $\tilde y$.

\subsection{Existing ABC algorithms}
\label{s:existing}

We set independent uniform priors on $[1, 3]$, $[-2\pi, 2\pi]$ and $[0, 1]$ for $K$, $\omega_0$ and $\gamma$, respectively.
Using the uniform prior as importance distributions, samples from the ABC posterior, $p_{0.5}((K, \omega_0, \gamma) ~|~\obs y)$, are produced using \Cref{ABC:Importance} (ABC-RS), \Cref{ABC:SMC} (ABC-SMC), and \Cref{MFABC:Importance} (MF-ABC-RS) with $\epsilon = 0.5$.
The continuation probability used in \Cref{MFABC:Importance} (MF-ABC-RS) is the constant $\alpha \equiv 0.5$.
The resulting samples are depicted in \Cref{post:ABC:Rejection,post:ABC:SMC,post:MFABC:Rejection}.
\Cref{table:ABCFlavours} shows the observed values for the performance of each of these algorithms, quantified in terms of ESS, total simulation time, and observed efficiency (i.e. the ratio of the first two).
Note that the ESS of the sample from \Cref{ABC:SMC} (ABC-SMC) depends on only $N_4 = 1500$ weights, $w_n^{(4)}$, corresponding to the final generation.
However, we will measure the observed efficiency by using the total time to simulate, which includes the total simulation time of the preceding generations.

\begin{table}
\center
\begin{tabularx}{\textwidth}{l *4{Y}}
\toprule
 Algorithm & ESS & Sim. time (\si{\minute}) & \multicolumn{2}{c}{Efficiency (ESS \si{\per\minute})} \\
 \midrule
 \Cref{ABC:Importance} (ABC-RS) & 148.0 & 43.6 & 3.39 & ~ \\
 \Cref{ABC:SMC} (ABC-SMC) & 255.1 & 48.8 & 5.23 & $\times 1.54$ \\
 \Cref{MFABC:Importance} (MF-ABC-RS) & 126.4 & 22.9 & 5.52 & $\times 1.63$ \\
 \bottomrule
\end{tabularx}
\caption{Comparing existing ABC algorithms for sampling $p_{0.5}((K, \omega_0, \gamma)~|~\obs y)$ based on a uniform prior, using $\epsilon=0.5$. 
The stopping condition for \Cref{ABC:Importance,MFABC:Importance} is $N=6000$.
The threshold schedule for \Cref{ABC:SMC} is $(2, 1.5, 1, 0.5)$ with stopping conditions $S_t$ of $N_t = 1500$ parameter proposals for $t=1,2,3,4$, leading to the same total number of parameter proposals as \Cref{ABC:Importance,MFABC:Importance}. 
The perturbation kernels $K_t$ in \Cref{ABC:SMC} are Gaussian with diagonal covariance equal to twice the empirical variance of the sample at generation $t$~\cite{Beaumont2009}. 
The continuation probability used in \Cref{MFABC:Importance} is fixed at the constant $\alpha \equiv 0.5$.
Percentages refer to the increase in efficiency over the base efficiency of ABC-RS.}
\label{table:ABCFlavours}
\end{table}

Even with minimal tuning of \Cref{MFABC:Importance,ABC:SMC}, the samples built using these algorithms both show significant improvements in efficiency.
We have chosen stopping conditions to ensure equal number of parameter proposals for each algorithm, in order to demonstrate the distinct effects of each.
\Cref{ABC:SMC} (ABC-SMC) produces a larger ESS for a similar simulation time.
This is characteristic of ABC-SMC, whereby parameters with low likelihood are less likely to be proposed.
However, \Cref{MFABC:Importance} (MF-ABC-RS) instead speeds up the simulation time of the fixed number of parameter proposals, albeit with some damage to the ESS.
This result illustrates the orthogonal effects of the SMC and multifidelity ABC algorithms, and thus the potential for combining the techniques in \Cref{MFABC:SMC} to produce further gains in efficiency.

The key to the success of MF-ABC-RS and the multifidelity approach in general is the assumption that the low-fidelity model is cheaper to simulate than the high-fidelity model.
In this case, the high-fidelity model in \Cref{eq:Kuramoto_hi} has a mean (respectively, standard deviation) simulation time of approximately 520 (638) \si{\micro\second} to simulate, while the low-fidelity model in \Cref{eq:Kuramoto_lo} has a simulation time of approximately 10 (12) \si{\micro\second}.
Note that these averages and standard deviations are observed across the uniform distribution of parameter values on the intervals $[1, 3]$, $[-2\pi, 2\pi]$ and $[0, 1]$ for $K$, $\omega_0$ and $\gamma$, respectively.
The relatively large standard deviations imply that parameter proposals in this domain can produce very different simulation times.
Different importance distributions will therefore vastly alter the relative simulation costs of the high-fidelity and low-fidelity models.

\subsection{Multifidelity ABC-SMC}
\label{s:Results}

In order to demonstrate the increased efficiency of combining multifidelity approaches with SMC, we produced $100$ samples from the ABC posterior, $p_{0.1}((K, \omega_0, \gamma)~|~\obs y)$, consisting of $50$ replicates from each of \Cref{ABC:SMC} (ABC-SMC) and \Cref{MFABC:SMC} (MF-ABC-SMC).
Common to both algorithms is the number of generations, $T=8$, which corresponds to the nested sequence of ABC neighbourhoods $\Omega_{\epsilon_t}$ with the sequence of thresholds 2.0, 1.5, 1.0, 0.8, 0.6, 0.4, 0.2 and 0.1.
Each generation has a stopping condition of $\mathrm{ESS} \geq 400$, evaluated after every $100$ parameter proposals (to allow for parallelisation).
This condition reflects a specification that we need each generation's sample to be, in some sense, `good enough' to produce a reliable importance distribution that can be used in the next generation.
Finally, we specified the perturbation kernels $K_t(\cdot~|~(K, \omega_0, \gamma)^{(t)}_n)$ at each generation to be Gaussians centred on the parameter value $(K, \omega_0, \gamma)^{(t)}_n$.
The covariance matrices are diagonal matrices $\mathrm{diag}(\sigma_K^{(t)}, \sigma_{\omega_0}^{(t)}, \sigma_\gamma^{(t)})$, where 
\begin{align*}
(\sigma_K^{(t)})^2 &= 2 \frac{\sum |w_n^{(t)}| (K_n^{(t)} - \mu_K^{(t)})^2}{\sum |w_n^{(t)}|}, \\
\mu_K^{(t)} &= \frac{\sum |w_n^{(t)}| K_n^{(t)} }{\sum |w_n^{(t)}|},
\end{align*}
and similarly for $\sigma_{\omega_0}^{(t)}$ and $\sigma_\gamma^{(t)}$.
These perturbation kernels implement a typical choice for the covariance of using twice the empirical variance of the observed parameter values~\cite{Beaumont2009,Filippi2013}.
Note that we use this definition for the multifidelity case also, where weights $w_n^{(t)}$ may be negative, since we are using the $\rho_t$ (as defined in \Cref{eq:new_target}) and not the $p_{\epsilon_t}$ as the target distributions.
Further to these common inputs, the parameters $\rho_1$ and $\rho_2$ are the only additional algorithm parameters we need to specify to implement \Cref{MFABC:SMC} (MF-ABC-SMC).
We set lower bounds of $\rho_1 = \rho_2 = 0.01$ on the allowed continuation probabilities, with the aim to limit the variability of $w_n^{(t)}$ to prevent the collapse of the ESS.

\subsubsection{Multifidelity ABC-SMC increases observed efficiency}
\label{s:Results:Efficiency}

\Cref{ABC:SMC,MFABC:SMC} were implemented and run using Julia 1.5.1 on a 36 core CPU (2 $\times$ 18 core with hyperthreading), 2.3/3.7 GHz, 768 GB RAM.
\Cref{tab:results} quantifies the average performance for each of \Cref{ABC:SMC,MFABC:SMC}, separated out for each generation from $t=1,\dots,8$, and also across the entire SMC algorithm.
The final row of this table demonstrates that \Cref{MFABC:SMC} (MF-ABC-SMC) results in a 60\% saving in the total simulation time required to produce a final sample from the ABC posterior $p_{0.1}$ with an ESS of approximately 400.
This corresponds to an efficiency 2.48 times that of \Cref{ABC:SMC} (ABC-SMC).
This performance improvement is derived from an 85\% saving in the simulation time required for each parameter proposal when averaged across the entire SMC algorithm.

\begin{table}[]
    \centering
    \begin{tabularx}{\textwidth}{c *9{Y}}
        \toprule
        ~ 
            & \multicolumn{3}{c}{a. Sim. time / proposal} 
            & \multicolumn{3}{c}{b. Simulation time} 
            & \multicolumn{3}{c}{c. Efficiency} \\
        ~
            & \multicolumn{3}{c}{\si{\micro\second}} 
            & \multicolumn{3}{c}{\si{\minute}} 
            & \multicolumn{3}{c}{ESS \si{\per\second}} \\
        \cmidrule(lr){2-4} \cmidrule(lr){5-7} \cmidrule(l){8-10}
        $t$ & ABC & \multicolumn{2}{c}{MF-ABC}  & ABC & \multicolumn{2}{c}{MF-ABC}  & ABC & \multicolumn{2}{c}{MF-ABC} \\
        \midrule
        1 & 537 & 530 & $-1\%$ & 13.6 & 13.4 & $-2\%$ & 0.51 & 0.52 & $\times 1.01$ \\
        2 & 675 & 31 & $-95\%$ & 11.2 & 1.3 & $-88\%$ & 0.63 & 8.35 & $\times 13.4$ \\
        3 & 665 & 33 & $-95\%$ & 15.2 & 2.0 & $-87\%$ & 0.46 & 4.41 & $\times 9.59$ \\
        4 & 599 & 35 & $-94\%$ & 11.6 & 1.9 & $-84\%$ & 0.60 & 4.99 & $\times 8.26$ \\
        5 & 525 & 35 & $-93\%$ & 11.3 & 2.2 & $-81\%$ & 0.62 & 3.99 & $\times 6.47$ \\
        6 & 424 & 36 & $-92\%$ & 10.8 & 2.6 & $-76\%$ & 0.64 & 2.90 & $\times 4.53$ \\
        7 & 330 & 46 & $-86\%$ & 13.6 & 6.0 & $-56\%$ & 0.50 & 1.16 & $\times 2.31$ \\
        8 & 268 & 80 & $-70\%$ & 15.5 & 11.8 & $-24\%$ & 0.43 & 0.57 & $\times 1.31$ \\
        \midrule
        SMC & 449 & 65 & $-85\%$ & 102.9 & 41.2 & $-60\%$ & 0.066 & 0.163 & $\times 2.48$ \\ 
        \bottomrule
    \end{tabularx}
    \caption{
    Row $t$ shows empirical mean values for 50 ABC-SMC samples and 50 MF-ABC-SMC samples of:
    (a) the simulation time divided by the number, $N_t$, of parameter proposals in generation $t$;
    (b) the simulation time in generation $t$;
    (c) ESS divided by simulation time in generation $t$.
    The final row is the empirical mean values for each sample, of
    (a) the total simulation time divided by the total number, $\sum_t N_t$, of parameter proposals;
    (b) the total simulation time;
    (c) ESS from generation $8$ divided by total simulation time (all generations).
    The third columns for each of (a), (b) and (c) quantify the improvement in performance of MF-ABC-SMC over ABC-SMC.
    }
    \label{tab:results}
\end{table}

\Cref{fig:empirical_means} depicts 100 estimates of the posterior mean of each of the parameters, $\mathbb E_{p_{0.1}}(K)$, $\mathbb E_{p_{0.1}}(\omega_0)$ and $\mathbb E_{p_{0.1}}(\gamma)$, constructed from the 50 samples generated by each of \Cref{ABC:SMC,MFABC:SMC}.
This figure demonstrates that the estimates generated by the multifidelity algorithm, MF-ABC-SMC, are of a similar quality to those produced by ABC-SMC.\footnote{
In addition to the observation of the posterior means in \Cref{fig:empirical_means}, we have also depicted representative posterior samples from each algorithm in \Cref{post:ABC:SMC:ESS400,post:MFABC:SMC:ESS400}.
}
Taking $K$, $\omega_0$ and $\gamma$ in turn, the means of the 50 estimates produced by each algorithm are, to three significant figures, indistinguishable at 2.17, 1.06 and 0.125.
Similarly, the variability of these estimates are also broadly similar: \Cref{ABC:SMC} (respectively, \Cref{MFABC:SMC}) produces estimates with standard deviations .0268 (.0277), .0034 (.0027), and .00271 (.00265).
Importantly, the distribution of total simulation times for each of these 100 estimates demonstrates that MF-ABC-SMC is reliably significantly less computationally expensive than ABC-SMC to produce comparable samples, with the average simulation times reflecting the 60\% saving identified in \Cref{tab:results}.

\begin{figure*}
\centering
\includegraphics[width=0.9\textwidth]{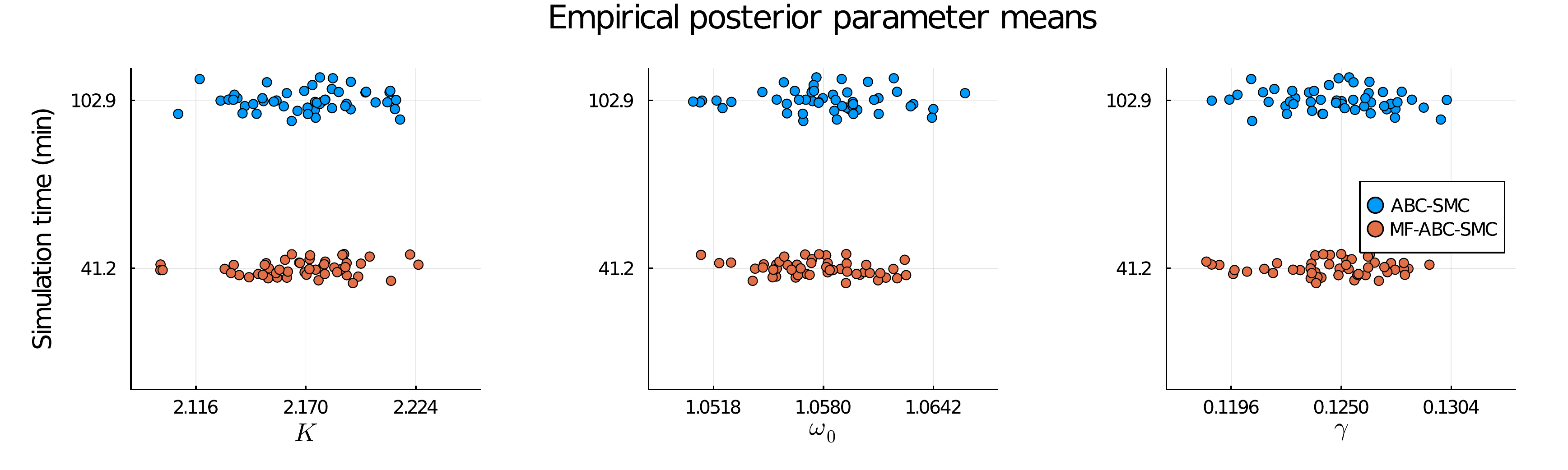}
\caption{
The empirical posterior means of each of $K$, $\omega_0$ and $\gamma$ for each of the 100 samples, plotted against the total simulation time required to generate each mean.
}
\label{fig:empirical_means}
\end{figure*}

Recall our initial observation that the average simulation time of the low-fidelity model, 10 \si{\micro\second}, is approximately 2\% of that of the high-fidelity model, 520 \si{\micro\second}, with expectations taken over the prior.
Given this initial difference in each model's average simulation time, the observed 148\% increase in efficiency from \Cref{ABC:SMC} to \Cref{MFABC:SMC} is determined by a number of other factors specific to the characteristics of SMC sampling, which we now explore.

\subsubsection{MF-ABC-SMC is more effective in early generations}
\label{s:Results:Generations} 

In producing the samples summarised in \Cref{tab:results,fig:empirical_means}, we enforced the same decreasing schedule of $\epsilon_t$ and the same stopping criteria ($\mathrm{ESS} \geq 400$) for all runs of both \Cref{ABC:SMC} (ABC-SMC) and \Cref{MFABC:SMC} (MF-ABC-SMC).
This allows a direct comparison between the two algorithms of the efficiencies at each generation, in addition to their overall performance.
\Cref{fig:efficiencies_generation} depicts the distributions of each of the measures for which the means are given in \Cref{tab:results}.
As the generation index varies, there are significant differences between the performance improvement generated by MF-ABC-SMC over ABC-SMC.
By all three of the measures, the benefit of MF-ABC-SMC appears to accrue most significantly in the earlier generations.
There are a number of factors that explain these observed differences.

\begin{figure}
\centering
\includegraphics[width=0.9\textwidth]{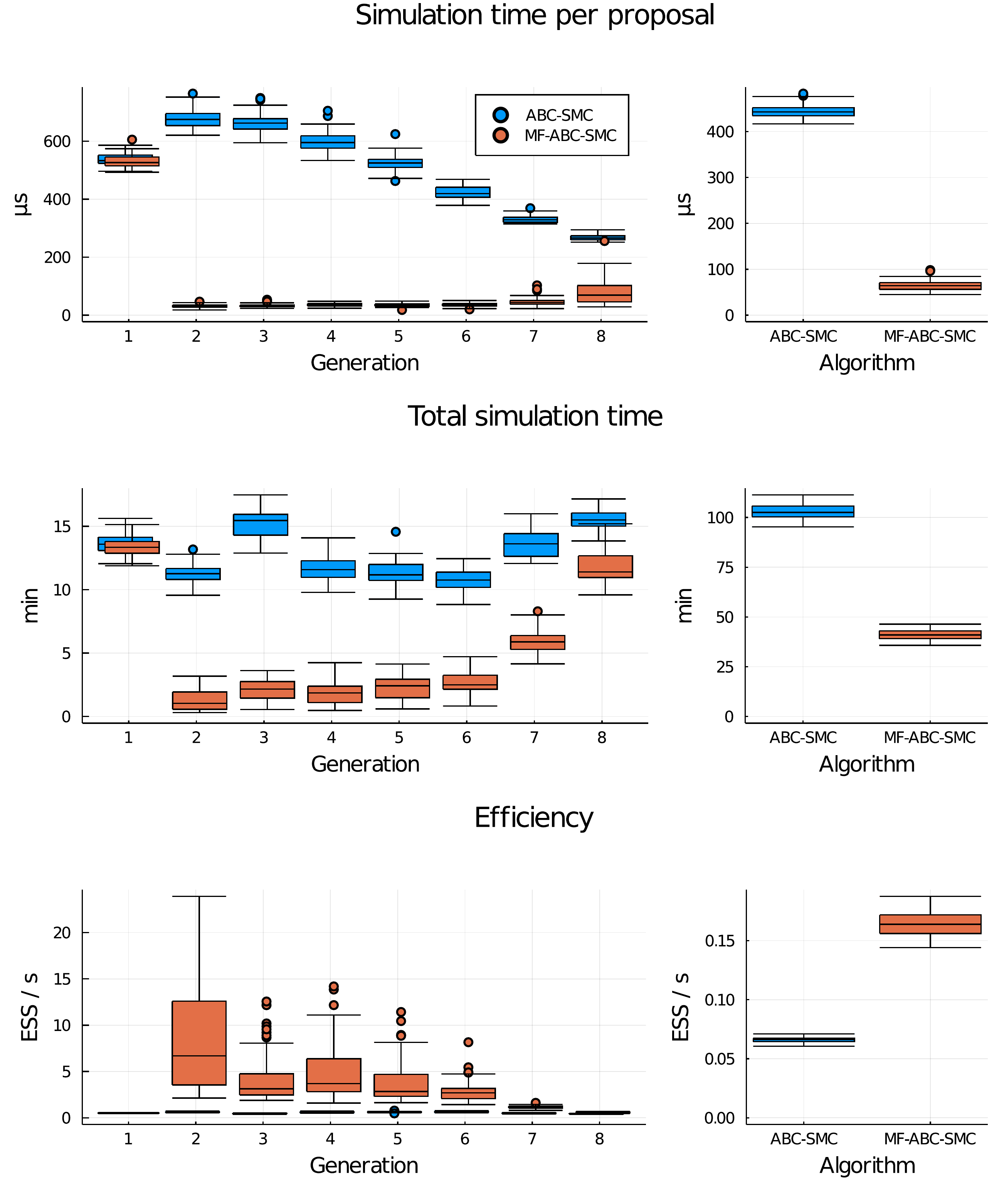}
\caption{
Observed distributions of simulation time per proposal, total simulation time, and efficiency for 50 samples from \Cref{ABC:SMC} (ABC-SMC, blue) and 50 samples from \Cref{MFABC:SMC} (MF-ABC-SMC, orange). 
Left: Summarised at each SMC generation. 
Right: Aggregated across all eight generations.
Means of each observed distribution are given in \Cref{tab:results}.
}
\label{fig:efficiencies_generation}
\end{figure}

First, we observe that the average simulation time per proposal of \Cref{ABC:SMC} (ABC-SMC) decreases as $t$ increases.
As the importance distribution evolves towards the posterior through the SMC algorithm, the most expensive high-fidelity simulations are required much less often.
This means that the relative saving of using the low-fidelity model also evolves with the generation, $t$.
The continuation probabilities used in \Cref{MFABC:SMC} (MF-ABC-SMC) aim to balance the saving in simulation cost against the probability of false positives and false negatives, according to the optima in \Cref{eq:etastar:unbounded}.
Since there is less computational cost available to save in later generations, we therefore find larger continuation probabilities.
\Cref{fig:etas} shows how the values of $\eta_1$ and $\eta_2$ vary for each of the 50 instances of \Cref{MFABC:SMC} (MF-ABC-SMC) across generations $t=2,4,6,8$, reflecting the adaptation of the algorithm to changing distributions in simulation costs.
Values in the bottom-left of the figure correspond to smaller continuation probabilities, and thus smaller total simulation times.
At $t=8$, the optimal continuation probabilities are no longer clustered to the bottom-left of the figure, but instead have begun to migrate to the $(1,1)$ corner, which corresponds to the classical ABC-SMC approach.
This leads to the increase in simulation time per proposal for MF-ABC-SMC at generation $t=8$, as more high-fidelity simulations are required.

\begin{figure}
\centering
\includegraphics[width=0.9\textwidth]{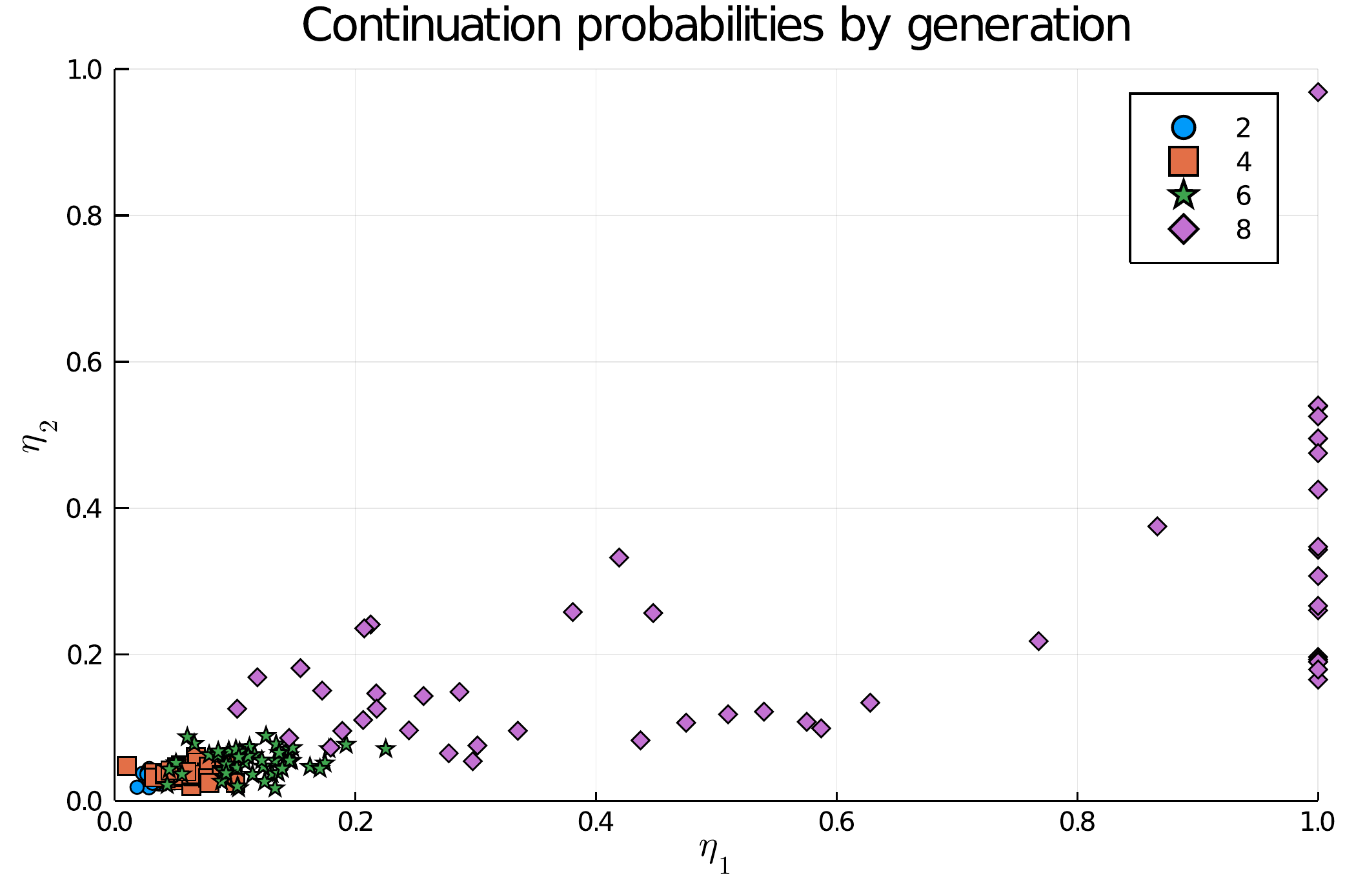}
\caption{
Estimated values of the optimal continuation probabilities $\alpha = \eta_1 \mathbb I(\tilde y \in \Omega_{\epsilon_t}) + \eta_2 \mathbb I(\tilde y \notin \Omega_{\epsilon_t})$ used to produce the multifidelity sample for each of the 50 replicates of generations 2, 4, 6 and 8.
}
\label{fig:etas}
\end{figure}

Second, we observe that for generation $t=8$ we have an average of 70\% improvement in simulation time per proposal but only a 24\% average improvement in the generation's total simulation time.
It follows that MF-ABC-SMC requires approximately 2.5 times as many parameter proposals, on average, in the final generation to produce an ESS of at least 400 when compared with ABC-SMC, meaning that only a 30\% improvement in efficiency can be found.
Thus, the final generation incurs over a quarter of the total simulation cost of MF-ABC-SMC, on average.
The improved efficiency at the final generation is relatively small because at smaller ABC thresholds the inaccuracy of the low-fidelity model becomes more significant.
In this example, the underlying stochasticity of the high-fidelity model means that the probability $\mathbb P(y \in \Omega_{0.1}~|~ \tilde y \in \Omega_{0.1})$ is sufficiently small to produce a high false positive rate, corresponding to a relatively large value of $\fp W$ in \Cref{etastar:unbounded}.
This reduces the optimal available efficiency towards that of classical ABC-SMC.
It should be noted that a strength of the adaptive method of selecting continuation probabilities is that this case can be detected, so that the efficiency of \Cref{MFABC:SMC} (MF-ABC-SMC) is bounded below by the efficiency of \Cref{ABC:SMC} (ABC-SMC).

Finally, note that the sequence of $\epsilon_t$ was chosen to be equal across ABC-SMC and MF-ABC-SMC to allow for a direct comparison between the efficiencies at each generation.
This constraint produced an overall 60\% saving in simulation time, and a 148\% increase in efficiency.
However, in practice, we should also aim to adaptively choose the sequence $\epsilon_t$ (for either algorithm) to optimise performance.
In the following subsection, we describe how incorporating the adaptive selection of $\epsilon_t$ based on the preceding generations' output allows for a better comparison between the two algorithms.

\subsubsection{MF-ABC-SMC reduces ABC approximation bias}
\label{s:Adaptive}

Implementing \Cref{MFABC:SMC} (MF-ABC-SMC) requires a decreasing sequence of $\epsilon_t$ values to be pre-specified.
Often, appropriate values of $\epsilon_t$ cannot be determined before any simulations have been generated.
If the sequence of $\epsilon_t$ decreases too slowly then the algorithm will take a long time to reach the target posterior; too quickly, and acceptance rates will be too low.
As a result, rather than specifying a sequence of thresholds \emph{a priori}, previous work in the SMC context~\cite{DelMoral2012} has explored choosing $\epsilon_{t+1}$ by predicting its effect on the ESS of that generation.
In the spirit of this approach, we adaptively choose the sequence $\epsilon_t$ by predicting its effect on the efficiency of that generation, as defined in \Cref{lemma:phi}.

\begin{figure}
    \centering
    \subfloat[Adaptive ABC-SMC.\label{fig:adaptive_epsilon}]{\includegraphics[width=0.45\textwidth]{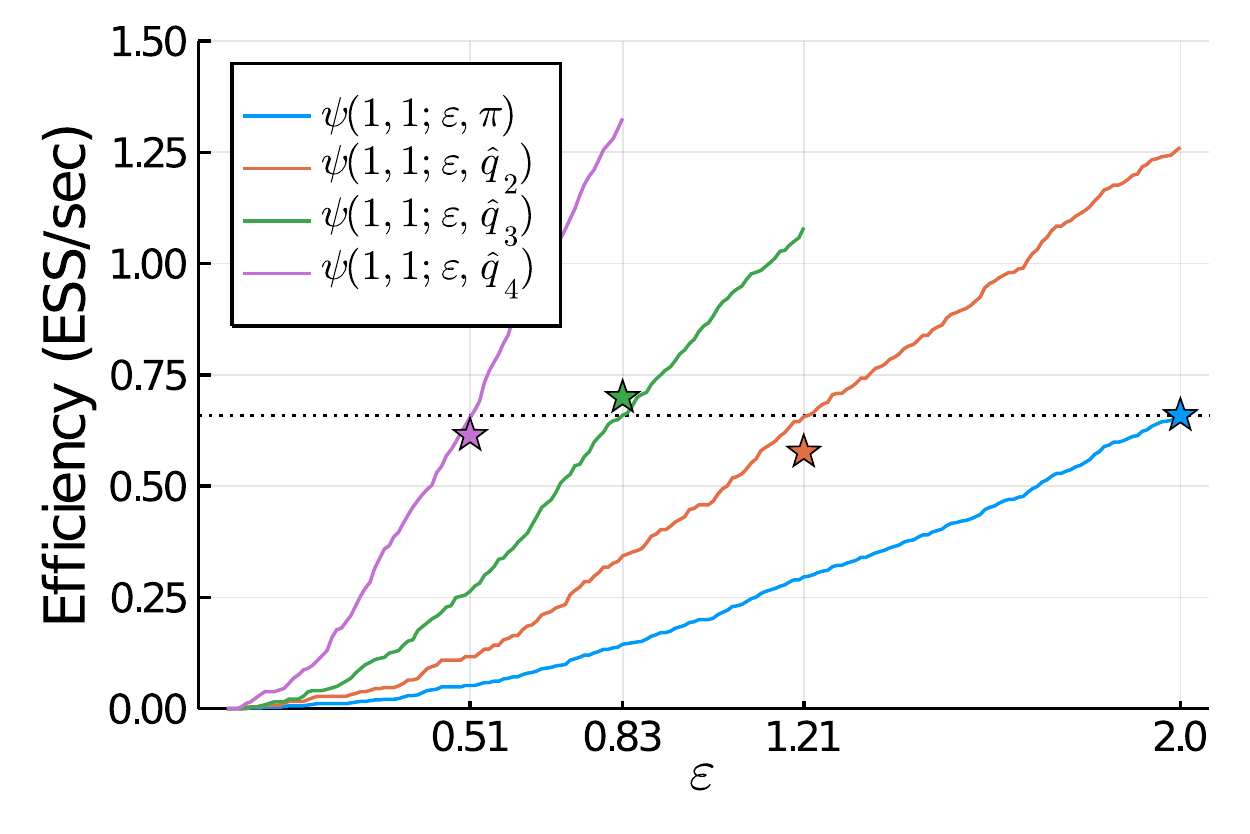}}~
    \subfloat[Adaptive MF-ABC-SMC.\label{fig:adaptive_epsilon_eta}]{\includegraphics[width=0.45\textwidth]{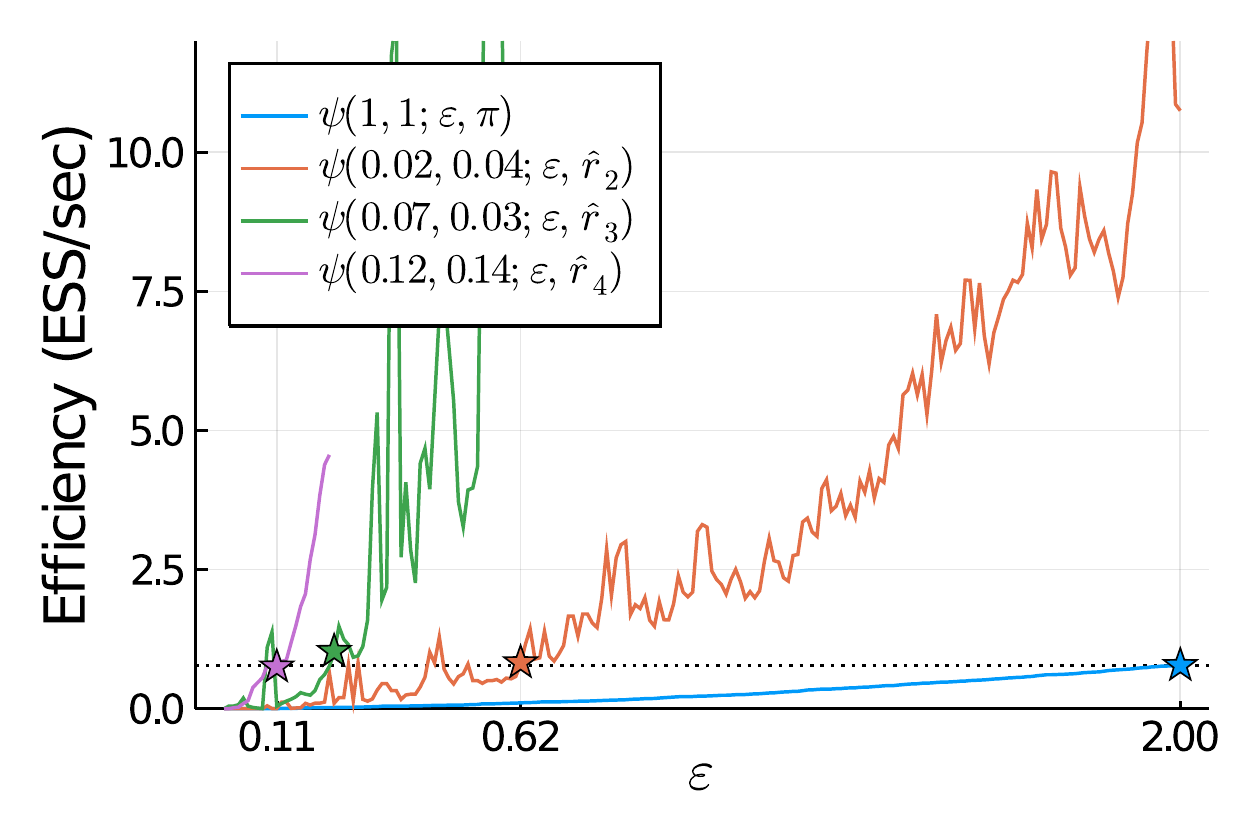}} 
    \caption{
    Adaptively selecting $\epsilon_t$ to achieve a predicted target efficiency equal to the efficiency of the first generation. 
    Curves plot the estimated efficiency as a function of $\epsilon$ for (a) importance distributions $\hat q_t$, and (b) importance distributions $\hat r_t$ and associated optimised continuation probabilities.
    Stars plot observed efficiencies, $\mathrm{ESS}/T_{\mathrm{tot}}$, against the selected value of $\epsilon_t$.
    }
    \label{fig:adaptive}
\end{figure}

\Cref{fig:adaptive} depicts a possible strategy for choosing $\epsilon_{t+1}$ conditionally on the output from generation $t$ in ABC-SMC and MF-ABC-SMC.
The key to this strategy is the observation that the efficiency of \Cref{MFABC:Importance}, defined by \Cref{eq:Phi,eq:PhiComponents}, depends on the ABC threshold, $\epsilon$, and the importance distribution, $\hat q$, used in \Cref{MFABC:Importance}.
By writing the efficiency as $\psi(\eta_1, \eta_2;~\epsilon, \hat q)$ and fixing $\eta_1$, $\eta_2$ and $\hat q$, we can consider the efficiency of \Cref{MFABC:Importance} as a function of $\epsilon$.
In particular, by setting $\eta_1 = \eta_2 = 1$, the efficiency of \Cref{ABC:Importance}, $\psi(1, 1; \epsilon, \hat q)$, also varies with $\epsilon$.
We assume that at each generation we have a target efficiency, $\psi_t^\star$, that enables a given ESS to be generated with a known computational budget.

In generation $t$ of \Cref{MFABC:SMC} (MF-ABC-SMC), we produce the sample $\{ \theta_n^{(t)}, w_n^{(t)} \}$ from \Cref{MFABC:Importance} (MF-ABC-IS), which can used to define an importance distribution, $\hat r_{t+1}$.
Steps 5 and 6 of \Cref{MFABC:SMC} then use the next ABC threshold, $\epsilon_{t+1}$, to calculate optimal continuation probabilities $(\eta_1^\star, \eta_2^\star)$ by maximising the efficiency function, $\psi(\eta_1, \eta_2; \epsilon_{t+1}, \hat r_{t+1})$.
In the case where $\epsilon_{t+1}$ is unknown, an adaptive approach to finding an appropriate value is to replace steps 5 and 6 of \Cref{MFABC:SMC} with the following subroutine:
\begin{enumerate}[label=\alph*.]
    \item find $(\eta_1^\star, \eta_2^\star)$ to maximise the efficiency, $\psi(\eta_1, \eta_2; \epsilon_t, \hat r_{t+1})$, using \Cref{etastar};
    \item set $\epsilon^{\star} \leftarrow \max \left\{ 0 < \epsilon \leq \epsilon_t ~:~ \psi(\eta_1^\star, \eta_2^\star; \epsilon, \hat r_{t+1}) \leq \psi^\star_{t+1} \right\} $, or $\epsilon^{\star} \leftarrow \epsilon_t$ if this set is empty;
    \item set $\epsilon_{t+1} \leftarrow \epsilon^\star$ and continue to step 7 of \Cref{MFABC:SMC}.
\end{enumerate}
This procedure produces a sequence of ABC thresholds designed such that each generation's efficiency is maintained at a target level.
It alternates between finding continuation probabilities that maximise the efficiency at the preceding threshold, $\epsilon_t$, and then finding a value of $\epsilon_{t+1} < \epsilon_t$ such that the predicted efficiency matches the target.
If $\epsilon_{t+1}=\epsilon_t$ then the target efficiency, $\psi_{t+1}^\star$, needs to be reviewed as it is not achievable.
Note that for the case of ABC-SMC, which is equivalent to MF-ABC-SMC with fixed continuation probabilities $\eta_1 = \eta_2 = 1$, we skip the optimisation in step a and use $\eta_1^\star = \eta_2^\star = 1$ in step b.

The strength of the multifidelity approach in this context is that the additional degrees of freedom afforded by the continuation probabilities allows for the ABC thresholds to decrease more quickly, while maintaining a target efficiency at each generation.
This benefit is depicted in \Cref{fig:adaptive}.
Each curve in the left-hand plot is the predicted efficiency $\psi(1, 1; \epsilon, \hat q_{t+1})$ as a function of $\epsilon$.
Each curve in the right-hand plot is the predicted efficiency $\psi(\eta_1^\star, \eta_2^\star; \epsilon, \hat r_{t+1})$, again as a function of $\epsilon$, where we have found optimal continuation probabilities, $(\eta_1^\star, \eta_2^\star)$.
For each algorithm, we choose the target efficiency at each generation $t>1$ to equal the efficiency observed in generation $1$.
\Cref{fig:adaptive_epsilon} demonstrates the adaptive threshold selection implemented for four generations of \Cref{ABC:SMC} (ABC-SMC), producing a decreasing sequence for $\epsilon_t$ of $2 > 1.21 > 0.83 > 0.51$.
In comparison, \Cref{fig:adaptive_epsilon_eta} shows how the adaptive selection of thresholds in \Cref{MFABC:SMC} (MF-ABC-SMC) produces a sequence for $\epsilon_t$ of $2 > 0.62 > 0.23 > 0.11$.
Clearly, the adaptive sequence of $\epsilon_t$ enabled by MF-ABC-SMC decreases much more quickly than the equivalent sequence for ABC-SMC, while the efficiency of each generation remains broadly constant and predictable.

In our example, each generation has a stopping condition of $\mathrm{ESS} \geq 400$.
As a result, by choosing to specify a constant target efficiency equal to the observed efficiency of generation $t=1$, we effectively impose a constant target simulation budget for each generation.
Since (for the example in \Cref{fig:adaptive}) we have specified four generations for each run of the adaptive versions of \Cref{ABC:SMC,MFABC:SMC}, we have thus specified a fixed, equal total simulation time for each algorithm.
In this setting, the results in \Cref{fig:adaptive} show that the bias incurred by using an ABC approximation to the posterior with threshold $\epsilon>0$ is vastly reduced by implementing the multifidelity approach to SMC.
Using MF-ABC-SMC, the sample produced in generation $4$ is from $p_{0.11}(\theta~|~\obs y)$, while ABC-SMC can only produce a sample from $p_{0.51}(\theta~|~\obs y)$ for a similar computational cost.
Thus, by incorporating the multifidelity approach into a method for adaptively selecting ABC thresholds, we can afford to allow the sequence $\epsilon_t$ of MF-ABC-SMC thresholds to decrease much more rapidly, at no cost to the efficiency of the algorithm.

\section{Discussion and conclusions}
\label{s:Discussion}

In this work we have examined how to integrate two approaches to overcoming the computational bottleneck of repeated simulation within ABC: 
the SMC technique for producing parameter proposals,
and the multifidelity technique for using low-fidelity models to reduce overall simulation time.
By combining these approaches, we have produced the MF-ABC-SMC algorithm in \Cref{MFABC:SMC}.

The results in \Cref{s:Example} demonstrate that the efficiency of sampling from the ABC posterior (measured as the ratio of the ESS to simulation time) can be significantly improved by using \Cref{MFABC:SMC} (MF-ABC-SMC) in place of \Cref{ABC:SMC} (ABC-SMC).
This improvement was demonstrated by using a common schedule of decreasing ABC thresholds for both algorithms.
In this case, the increase in efficiency was most significant during the early SMC generations, where both the average simulation time of the high fidelity model and overall acceptance rates are relatively large.
By also implementing an adaptive ABC thresholding scheme into both algorithms and thus allowing different sequences of ABC threshold, MF-ABC-SMC is shown to greatly reduce the bias incurred by using an ABC approximation to the posterior, in comparison to ABC-SMC.

Having introduced the MF-ABC-SMC algorithm, a number of open questions emerge.
Some of these questions are specific to the implementation of multifidelity approaches.
However, others arise from a re-evaluation of SMC implementation strategies in this new context.
Below, we consider these two classes of question in turn.

\subsection{Multifidelity implementation}

The key to maximising the benefit of MF-ABC-SMC is the ability to set a continuation probability based on the simulations generated during preceding generations.
The estimates in \Cref{eq:MonteCarlo} of the quantities in \Cref{eq:PhiComponents} are natural Monte Carlo approximations to the required integrals.
In the SMC context, when generating the continuation probability for generation $t+1$, each of these estimates could actually be constructed using the parameter proposals, importance weights, simulations, distances, and continuation probabilities of any (or all) generations $1 \leq s \leq t$, not just generation $t$.
Future work should clarify how best to combine many generations' samples into estimates of \Cref{eq:PhiComponents},
and the potential for improvement that might arise from this.

Another question arises when we break the assumption made at the start of \Cref{s:MFABC}, and no longer assume that the output spaces of each model fidelity are such that $\tilde{\mathcal Y} = \mathcal Y$.
In general, the observed data, $\obs{\tilde y} \neq \obs y$, the distance metrics, $\tilde d(\tilde y, \obs{\tilde y}) \neq d(y,\obs y)$, and the thresholds, $\tilde \epsilon \neq \epsilon$, may all be distinct.
In this case, any estimate, $\tilde w$, of $\mathbb I(y \in \Omega_\epsilon)$ can be used in place of $\mathbb I(\tilde y \in \Omega_\epsilon)$ to give a multifidelity acceptance weight of the form
\[
 w(\theta, \tilde y, u, y) = \tilde w(\theta, \tilde y) + \frac{\mathbb I(u<\alpha(\theta, \tilde y))}{\alpha(\theta, \tilde y)} \left( \mathbb I(y \in \Omega_\epsilon) - \tilde w(\theta, \tilde y) \right).
\]
Note that, in the case of equal output spaces, we might consider using $\tilde w(y_n) = \mathbb I(\tilde y \in \Omega_{\tilde \epsilon})$ for distinct thresholds, $\tilde \epsilon \neq \epsilon$.
However, in general, $\tilde w(\theta, \tilde y)$ may also encompass completely different output spaces based on distinct modelling frameworks for the same system (albeit with the same parameter space).
In a similar way to evolving acceptance probabilities across generations, we could also evolve the estimate $\tilde w$ across generations, using the information gathered from repeated simulation of both high-fidelity and low-fidelity models to better approximate $\mathbb I(y \in \Omega_\epsilon)$ and thus reduce our reliance on the high-fidelity model.

The form of continuation probability, $\alpha(\theta, \tilde y)$, defined in \Cref{eq:constantrates} implements a multifidelity algorithm that provides a single continuation probability for $\tilde y \in \Omega_\epsilon$ and another for $\tilde y \notin \Omega_\epsilon$, independently of the parameter value.
There may be significant improvements to the multifidelity approach available through making $\alpha(\theta, \tilde y)$ depend more generally on $\theta$ and $\tilde y$.
For example, it is probable that there should be less need to simulate $y$ after generating $\tilde y$ such that $d(\tilde y, \obs y) \gg \epsilon$, than if $d(\tilde y, \obs y) = 1.01 \epsilon$.
However, with $\alpha$ as defined in \Cref{eq:constantrates}, these two cases are treated equally.
There is likely to be significant potential for improved performance from exploring less simplistic forms for the continuation probability.

Finally, as has been noted in previous work on multifidelity ABC~\cite{Prescott2020}, there is much potential in being able to use multiple low-fidelity models, beyond a single low-fidelity approximation.
If there exist multiple low-fidelity models, different generations of MF-ABC-SMC may allow us to progressively focus on using the most efficient model, and identify the specific regions of parameter space for which one model or another may bring most benefit for parameter estimation.

\subsection{SMC implementation}

Previous research into the implementation of ABC-SMC has ensured that the importance distribution formed from the preceding generation, as given in \Cref{eq:importance}, is optimal, by choice of the perturbation kernels, $K_t$~\cite{Filippi2013}. 
This has typically been treated as a requirement to trade-off a wide exploration of parameter space against a high acceptance rate.
We have replaced the acceptance rate by the theoretical efficiency as the quantification of an ABC algorithm's performance.
Therefore, since we now explicitly include the simulation time in the definition of the algorithm's performance, the perturbation kernels that optimise the tradeoff between efficiency and exploration may be reformulated and hence different optima used.

In \Cref{s:Results} we applied a widely-used strategy for determining the perturbation kernels \cite{Beaumont2009}.
This strategy has been justified only in the context of positive weights and an importance distribution that approximates the ABC posterior.
However, the importance distribution used in MF-ABC-SMC potentially makes this choice of perturbation kernels suboptimal.
It remains to extend existing results on the optimality of perturbation kernels, such as those in \cite{Filippi2013}, to apply to importance distributions of the form in \Cref{eq:new_importance} that approximate the alternative target distribution in \Cref{eq:new_target}.
There is therefore justification for reopening the question of specifying optimal perturbation kernels for SMC, in the context of both including simulation time in the performance tradeoff and for dealing with multifidelity samples with negative weights.

We have restricted the choice of SMC sampler to the $O(N^2)$ PMC sampler, due to the effect of the $O(N)$ sampler in diluting the benefits of the multifidelity approach, as discussed in \Cref{s:MF-ABC-SMC,LinearSMCSampler}.
Although this choice can be justified when simulation times dominate the algorithm, future work in this area should seek to optimise a multifidelity approach in the context of the more efficient sampling technique.
Given that the observed effect of multifidelity ABC is to significantly reduce the simulation time per parameter proposal, and thereby make the simulation cost at each iteration much \emph{less} dominant, this extension will be necessary to ensure optimal performance of MF-ABC-SMC in larger sample sizes.

In \Cref{s:Adaptive} we described how to adapt \Cref{ABC:SMC,MFABC:SMC} to implement an adaptive sequence of $\epsilon_t$ as an approach to minimising bias for a fixed computational budget.
The strategy we used was to choose $\epsilon_t$ to maintain an efficiency as close as possible to a target, set to equal the observed efficiency of the first generation.
Further work in this area should investigate the use of more sophisticated strategies for choosing each $\epsilon_t$.
This question relates closely to the sequence of stopping criteria.
In \Cref{s:Results}, we constrained the effective sample size at each generation to be at least $400$ to ensure a relatively low variance in each generation's sample, but this choice was made arbitrarily.
Future work should therefore consider how to best achieve the ultimate goal of the SMC algorithm: a sample from a final generation with minimal bias relative to the true posterior, with a small variance, and constructed quickly.
This goal should be achieved by optimising the complex, interdependent choices of stopping criteria, ABC thresholds, continuation probabilities and perturbation kernels.


\begin{thebibliography}{10}
	
	\bibitem{Alsing2018}
	{\sc J.~Alsing, B.~D. Wandelt, and S.~M. Feeney}, {\em Optimal proposals for
		approximate {Bayesian} computation},  (2018),
	\url{https://arxiv.org/abs/1808.06040v1}.
	
	\bibitem{Beaumont2010}
	{\sc M.~A. Beaumont}, {\em Approximate {Bayesian} computation in evolution and
		ecology}, Annual Review of Ecology, Evolution, and Systematics, 41 (2010),
	pp.~379--406, \url{https://doi.org/10.1146/annurev-ecolsys-102209-144621}.
	
	\bibitem{Beaumont2009}
	{\sc M.~A. Beaumont, J.-M. Cornuet, J.-M. Marin, and C.~P. Robert}, {\em
		Adaptive approximate {Bayesian} computation}, Biometrika, 96 (2009),
	pp.~983--990, \url{https://doi.org/10.1093/biomet/asp052}.
	
	\bibitem{Julia}
	{\sc J.~Bezanson, A.~Edelman, S.~Karpinski, and V.~B. Shah}, {\em Julia: A
		fresh approach to numerical computing}, SIAM Review, 59 (2017), pp.~65--98,
	\url{https://doi.org/10.1137/141000671}.
	
	\bibitem{Cappe2004}
	{\sc O.~Capp\`{e}, A.~Guillin, J.~M. Marin, and C.~P. Robert}, {\em {Population
			Monte Carlo}}, Journal of Computational and Graphical Statistics, 13 (2004),
	pp.~907--929, \url{https://doi.org/10.1198/106186004X12803},
	\url{https://doi.org/10.1198/106186004X12803},
	\url{https://arxiv.org/abs/https://doi.org/10.1198/106186004X12803}.
	
	\bibitem{Croci2018}
	{\sc M.~Croci, M.~B. Giles, M.~E. Rognes, and P.~E. Farrell}, {\em Efficient
		white noise sampling and coupling for multilevel {Monte Carlo} with
		non-nested meshes},  (2018), \url{https://arxiv.org/abs/1803.04857v2}.
	
	\bibitem{DelMoral2006}
	{\sc P.~{Del Moral}, A.~Doucet, and A.~Jasra}, {\em Sequential {Monte Carlo}
		samplers}, Journal of the Royal Statistical Society: Series B (Statistical
	Methodology), 68 (2006), pp.~411--436,
	\url{https://doi.org/10.1111/j.1467-9868.2006.00553.x}.
	
	\bibitem{DelMoral2012}
	{\sc P.~{Del~Moral}, A.~Doucet, and A.~Jasra}, {\em An adaptive sequential
		{Monte Carlo} method for approximate {Bayesian} computation}, Statistics and
	Computing, 22 (2012), pp.~1009--1020,
	\url{https://doi.org/10.1007/s11222-011-9271-y}.
	
	\bibitem{Elvira2018}
	{\sc V.~Elvira, L.~Martino, and C.~P. Robert}, {\em Rethinking the effective
		sample size},  (2018), \url{https://arxiv.org/abs/1809.04129v1}.
	
	\bibitem{Everitt2017}
	{\sc R.~G. Everitt and P.~A. Rowi\'{n}ska}, {\em Delayed acceptance {ABC-SMC}},
	(2017), \url{https://arxiv.org/abs/1708.02230v1}.
	
	\bibitem{Fearnhead2012}
	{\sc P.~Fearnhead and D.~Prangle}, {\em Constructing summary statistics for
		approximate {Bayesian} computation: Semi-automatic approximate {Bayesian}
		computation}, Journal of the Royal Statistical Society: Series B (Statistical
	Methodology), 74 (2012), pp.~419--474,
	\url{https://doi.org/10.1111/j.1467-9868.2011.01010.x}.
	
	\bibitem{Filippi2013}
	{\sc S.~Filippi, C.~P. Barnes, J.~Cornebise, and M.~P.~H. Stumpf}, {\em On
		optimality of kernels for approximate {Bayesian} computation using sequential
		{Monte Carlo}}, Statistical Applications in Genetics and Molecular Biology,
	12 (2013), \url{https://doi.org/10.1515/sagmb-2012-0069}.
	
	\bibitem{Hannay2018}
	{\sc K.~M. Hannay, D.~B. Forger, and V.~Booth}, {\em Macroscopic models for
		networks of coupled biological oscillators}, Science Advances, 4 (2018),
	p.~e1701047, \url{https://doi.org/10.1126/sciadv.1701047}.
	
	\bibitem{Harrison2017a}
	{\sc J.~U. Harrison and R.~E. Baker}, {\em An automatic adaptive method to
		combine summary statistics in approximate {Bayesian} computation}.
	\newblock 2017, \url{https://arxiv.org/abs/1703.02341v1}.
	
	\bibitem{Hines2015}
	{\sc K.~E. Hines}, {\em A primer on {Bayesian} inference for biophysical
		systems}, Biophysical Journal, 108 (2015), pp.~2103--2113,
	\url{https://doi.org/10.1016/j.bpj.2015.03.042}.
	
	\bibitem{Jagiella2017}
	{\sc N.~Jagiella, D.~Rickert, F.~J. Theis, and J.~Hasenauer}, {\em
		Parallelization and high-performance computing enables automated statistical
		inference of multi-scale models}, Cell Systems, 4 (2017), pp.~194--206,
	\url{https://doi.org/10.1016/j.cels.2016.12.002}.
	
	\bibitem{Jasra2017}
	{\sc A.~Jasra, S.~Jo, D.~Nott, C.~Shoemaker, and R.~Tempone}, {\em Multilevel
		{Monte Carlo} in {Approximate Bayesian Computation}}.
	\newblock arXiv:1702.03628v1, 2017.
	
	\bibitem{Lester2018}
	{\sc C.~Lester, C.~A. Yates, and R.~E. Baker}, {\em Robustly simulating
		biochemical reaction kinetics using multi-level {Monte Carlo} approaches},
	Journal of Computational Physics, 375 (2018), pp.~1401--1423,
	\url{https://doi.org/10.1016/j.jcp.2018.06.045}.
	
	\bibitem{Liu2004}
	{\sc J.~S. Liu}, {\em {Monte Carlo} Strategies in Scientific Computing},
	Springer New York, 2004, \url{https://doi.org/10.1007/978-0-387-76371-2}.
	
	\bibitem{Lyne2015}
	{\sc A.-M. Lyne, M.~Girolami, Y.~Atchad{\'{e}}, H.~Strathmann, and D.~Simpson},
	{\em On {Russian Roulette} estimates for {Bayesian} inference with
		doubly-intractable likelihoods}, Statistical Science, 30 (2015),
	pp.~443--467, \url{https://doi.org/10.1214/15-sts523}.
	
	\bibitem{Ott2008}
	{\sc E.~Ott and T.~M. Antonsen}, {\em Low dimensional behavior of large systems
		of globally coupled oscillators}, Chaos: An Interdisciplinary Journal of
	Nonlinear Science, 18 (2008), p.~037113,
	\url{https://doi.org/10.1063/1.2930766}.
	
	\bibitem{Ott2009}
	{\sc E.~Ott and T.~M. Antonsen}, {\em Long time evolution of phase oscillator
		systems}, Chaos: An Interdisciplinary Journal of Nonlinear Science, 19
	(2009), p.~023117, \url{https://doi.org/10.1063/1.3136851}.
	
	\bibitem{Owen2013}
	{\sc A.~B. Owen}, {\em Monte Carlo theory, methods and examples}, 2013.
	
	\bibitem{Peherstorfer2018}
	{\sc B.~Peherstorfer, K.~Willcox, and M.~Gunzburger}, {\em Survey of
		multifidelity methods in uncertainty propagation, inference, and
		optimization}, SIAM Review, 60 (2018), pp.~550--591,
	\url{https://doi.org/10.1137/16M1082469}.
	
	\bibitem{Picchini2016}
	{\sc U.~Picchini and J.~L. Forman}, {\em Accelerating inference for diffusions
		observed with measurement error and large sample sizes using approximate
		{Bayesian} computation}, Journal of Statistical Computation and Simulation,
	86 (2016), pp.~195--213, \url{https://doi.org/10.1080/00949655.2014.1002101}.
	
	\bibitem{Prangle2016}
	{\sc D.~Prangle}, {\em Lazy {ABC}}, Statistics and Computing, 26 (2016),
	pp.~171--185, \url{https://doi.org/10.1007/s11222-014-9544-3}.
	
	\bibitem{Prangle2017}
	{\sc D.~Prangle}, {\em Adapting the {ABC} distance function}, Bayesian
	Analysis, 12 (2017), pp.~289--309, \url{https://doi.org/10.1214/16-BA1002}.
	
	\bibitem{Prescott2020}
	{\sc T.~P. Prescott and R.~E. Baker}, {\em Multifidelity approximate {Bayesian}
		computation}, SIAM/ASA Journal on Uncertainty Quantification, 8 (2020),
	pp.~114--138, \url{https://doi.org/10.1137/18M1229742}.
	
	\bibitem{Schnoerr2017}
	{\sc D.~Schnoerr, G.~Sanguinetti, and R.~Grima}, {\em Approximation and
		inference methods for stochastic biochemical kinetics{\textemdash}a tutorial
		review}, Journal of Physics A: Mathematical and Theoretical, 50 (2017),
	p.~093001, \url{https://doi.org/10.1088/1751-8121/aa54d9}.
	
	\bibitem{Sisson2018}
	{\sc S.~A. Sisson, Y.~Fan, and M.~A. Beaumont}, {\em Handbook of Approximate
		{Bayesian} Computation}, Chapman \& Hall / CRC, 2018.
	
	\bibitem{Sisson2007}
	{\sc S.~A. Sisson, Y.~Fan, and M.~M. Tanaka}, {\em Sequential {Monte Carlo}
		without likelihoods}, Proceedings of the National Academy of Sciences, 104
	(2007), pp.~1760--1765, \url{https://doi.org/10.1073/pnas.0607208104}.
	
	\bibitem{Sunnaker2013}
	{\sc M.~Sunn{\aa}ker, A.~G. Busetto, E.~Numminen, J.~Corander, M.~Foll, and
		C.~Dessimoz}, {\em Approximate {Bayesian} computation}, PLOS Computational
	Biology, 9 (2013), pp.~1--10,
	\url{https://doi.org/10.1371/journal.pcbi.1002803}.
	
	\bibitem{Toni2009}
	{\sc T.~Toni, D.~Welch, N.~Strelkowa, A.~Ipsen, and M.~P.~H. Stumpf}, {\em
		Approximate {Bayesian} computation scheme for parameter inference and model
		selection in dynamical systems}, Journal of the Royal Society Interface, 6
	(2009), pp.~187--202, \url{https://doi.org/10.1098/rsif.2008.0172}.
	
	\bibitem{Tran2015}
	{\sc M.~N. Tran and R.~Kohn}, {\em Exact {ABC} using importance sampling},
	(2015), \url{https://arxiv.org/abs/1509.08076v1}.
	
	\bibitem{Warne2018}
	{\sc D.~J. Warne, R.~E. Baker, and M.~J. Simpson}, {\em Multilevel rejection
		sampling for approximate {Bayesian} computation}, Computational Statistics
	and Data Analysis, 124 (2018), pp.~71--86,
	\url{https://doi.org/10.1016/j.csda.2018.02.009},
	\url{https://arxiv.org/abs/1702.03126v3}.
	
\end{thebibliography}

\newpage
\appendix

\section{Linear SMC Sampler}
\label{LinearSMCSampler}
The following procedure briefly describes the linear SMC sampling method of Del Moral et al, (2012) for the multifidelity context.
For each $z_n^{(t)} = (\theta_n^{(t)}, \tilde y_n^{(t)}, u_n^{(t)}, y_n^{(t)})$ produced in generation $t$, with weight $W_n^{(t)}$, a perturbation to $z_{\star}$ is proposed with density 
\[ g(z_\star) = K_t(\theta_\star~|~\theta_n^{(t)}) \check f(\tilde y_\star, y_\star~|~\theta_\star) \]
on $\mathcal Z = \Theta \times \mathcal Y \times [0,1] \times \mathcal Y$.
The proposal is accepted with a Metropolis--Hastings acceptance probability based on the ratio
\[
\frac{\hat w_t(z_\star)}{\hat w_t(z_n^{(t)})} 
\frac{K_t(\theta_n^{(t)}~|~\theta_\star) \pi(\theta_\star)}{K_t(\theta_\star~|~\theta_n^{(t)}) \pi(\theta_n^{(t)})},
\]
for the non-negative multifidelity weight
\[
\hat w_t(z) = \left| \mathbb I(\tilde y \in \Omega_{\epsilon_t}) + \frac{\mathbb I(u < \alpha_t)}{\alpha_t} \left( \mathbb I(y \in \Omega_{\epsilon_t}) - \mathbb I(\tilde y \in \Omega_{\epsilon_t})  \right) \right|.
\]
The MCMC step thus produces a sample point for the new generation, $z_n^{(t+1)}$, based on the MH acceptance step.
In the subsequent generation, the weight $W_n^{(t)}$ is then updated such that
\begin{equation}
	\label{eq:w_SMC}
	W_n^{(t+1)} \propto W_n^{(t)} \frac{\hat w_{t+1}(z_n^{(t+1)})}{\hat w_t(z_n^{(t+1)})}.
\end{equation}
Thus, for each $z_n^{(t+1)}$ we need to calculate both $\hat w_{t}(z_n^{(t+1)})$ and $\hat w_{t+1}(z_n^{(t+1)})$.

The benefit of the multifidelity weight is that, for $u < \alpha_t$, we only need to generate $\tilde y$ rather than $(\tilde y, y)$, avoiding a significant computational cost.
However, the weight update step given in \Cref{eq:w_SMC} requires two weight calculations dependent on $z_n^{(t+1)}$, each of which may require the expensive computation of $(\tilde y, y) \sim \check f(\cdot,\cdot~|~\theta)$.
Indeed, the cost saving of needing to generate $\tilde y \sim \tilde f(\cdot~|~\theta)$ alone for the weight update step exists only if
\[
u_n^{(t+1)} > \max \left\{
\alpha_t \left( \theta_n^{(t+1)}, \tilde y_n^{(t+1)} \right), 
\alpha_{t+1} \left( \theta_n^{(t+1)}, \tilde y_n^{(t+1)} \right) 
\right\}.
\]
Using the linear SMC sampling procedure thereby increases the simulation cost of updating the weights, relative to the $O(N^2)$ sequential importance sampling approach.
We assume that simulation costs dominate the cost of calculating importance weights, and thus we focus on sequential importance sampling.

\section{Choice of summary statistics}
\label{appendix:summary_statistics}

Recall the summary statistics
\begin{align*}
	S_1(R, \Phi) &= \left( \frac{1}{30} \int_0^{30} R(t) ~\mathrm dt \right)^2, \\
	S_2(R, \Phi) &= \frac{1}{30} \left( \Phi(30) - \Phi(0) \right), \\
	S_3(R, \Phi) &= R \left( T_{1/2} \right),
\end{align*}
where $T_{1/2}$ is the first value of $t \in [0,30]$ for which $\obs R(t)$ is halfway between $\obs R(0)=1$ and its average value $S_1(\obs R, \obs \Phi)^{1/2}$.
These statistics are connected to trajectories,$\phi_j(t)$ for $j =1,\dots,256$, of the high-fidelity model, \Cref{eq:Kuramoto_hi}, through the definition
\[
R(t) \exp (i \Phi(t)) = \frac{1}{256} \sum_{j=1}^{256} \exp(i \phi_j(t)).
\]
The low-fidelity model, in \Cref{eq:Kuramoto_lo}, directly models the evolution of $R$ and $\Phi$.
Example trajectories of the low-fidelity and high-fidelity model are shown in \Cref{fig:eg_dynamics}.

We can use the model in \Cref{eq:Kuramoto_lo} to justify the choice of summary statistics.
In particular, the steady-state value of $\tilde R$ is equal to
\[
\tilde R^\star = \left( 1 - 2\frac{\gamma}{K} \right)^{1/2},
\]
while we can write the solution $\tilde \Phi(t) = \omega_0 t$.
Then we use $S_1$ to approximate $(\tilde R^\star)^2 = 1 - 2\gamma/K$ and thus identify the ratio $\gamma/K$.
Similarly, $S_2 = \tilde \Phi(30)/30 = \omega_0$ allows us to directly identify $\omega_0$.
Finally, $S_3$ is a measure of the time-scale of the dynamics.
Trajectories with equal values for $S_1$ (i.e. equal steady states, and thus equal values for $\gamma/K$) can be distinguished by the speed at which they reach their steady state, which we will infer through $S_3$.
Note that the sampling point, $T_{1/2}$, used in $S_3$ is chosen to be relevant to the observed data specifically, and aims to distinguish any simulated trajectories from $\obs R$ and $\obs \Omega$ in particular.
Thus we select $S_3$ to identify the scale of $\gamma$ and $K$.
Hence, we assume that these three summary statistics will be sufficient to identify the parameters.

\section{Posterior samples}
\Cref{post:ABC:Rejection,post:ABC:SMC,post:ABC:SMC:ESS400,post:ABC:SMC:adaptive,post:MFABC:Rejection,post:MFABC:SMC:ESS400,post:MFABC:SMC:adaptive}
show samples from the posterior distributions $p_\epsilon((K, \omega_0, \gamma)~|~\obs y)$ approximating the Bayesian posteriors of the parameters for the Kuramoto oscillator network in \Cref{s:Example}.
The samples have been generated using \Cref{ABC:Importance} (ABC-RS), \Cref{ABC:SMC} (ABC-SMC), \Cref{MFABC:Importance} (MF-ABC-RS), and \Cref{MFABC:SMC} (MF-ABC-SMC).

The plots on the diagonal are one-dimensional empirical marginals for each parameter (i.e. weighted histograms).
The plots above the diagonal are all of the two-dimensional empirical marginals for each parameter pair, represented as heat maps.
The axes are discretised for this visualisation by partitioning each parameter's prior support into $B$ bins, where $B$ is the integer nearest to $(2 \times \mathrm{ESS})^{1/2}$.
For example, when $\mathrm{ESS} \approx 400$, each axis is partitioned into $28 \approx \sqrt{800}$ bins across its prior support.
The plots below the diagonal are all of the two-dimensional projections of the Monte Carlo set $\{ \theta_n, w_n \}$, where the weights $w_n$ are represented by colour.
In particular, negative weights are coloured orange and positive weights are purple.
Note that, for simplicity of visualisation, the weights are rescaled (without any loss of generality) to take values between $-1$ and $+1$.

\subsection{Existing ABC algorithms}
The posterior samples in \Cref{post:ABC:Rejection,post:ABC:SMC,post:MFABC:Rejection} from $p_{0.5}((K, \omega_0, \gamma)~|~\obs y)$ are generated by running \Cref{ABC:Importance,ABC:SMC,MFABC:Importance} for a fixed total of $N=6000$ parameter proposals and with threshold $\epsilon = 0.5$. 
For \Cref{ABC:SMC}, these proposals are split equally across four generations with decreasing thresholds $\epsilon_t = 2, 1.5, 1, 0.5$. 
The efficiency of generating these posterior samples is discussed in \Cref{s:existing}.

\subsection{Sequential Monte Carlo}
The posterior samples in \Cref{post:ABC:SMC:ESS400,post:MFABC:SMC:ESS400} from $p_{0.1}((K, \omega_0, \gamma)~|~\obs y)$ are generated by running the two SMC algorithms, \Cref{ABC:SMC,MFABC:SMC}, for eight generations with a common schedule of thresholds \[\epsilon_t = 2, 1.5, 1, 0.8, 0.6, 0.4, 0.2, 0.1,\] and with stopping condition of $\mathrm{ESS}=400$ at each generation.
Each figure is one representative output of the 50 runs of each of \Cref{ABC:SMC,MFABC:SMC} used in \Cref{s:Results}, where the efficiency of generating these posterior samples is discussed.

\subsection{Adaptive epsilon}
The posterior samples in \Cref{post:ABC:SMC:adaptive,post:MFABC:SMC:adaptive} are generated by the extension of the two SMC algorithms, \Cref{ABC:SMC,MFABC:SMC}, to allow for adaptive selection of thresholds $\epsilon$ as discussed in \Cref{s:Adaptive}.
Running the adaptive extensions of each of \Cref{ABC:SMC,MFABC:SMC} for four generations, with a fixed efficiency in each generation and stopping condition $\mathrm{ESS}=400$, produces the posterior samples in \Cref{post:ABC:SMC:adaptive,post:MFABC:SMC:adaptive}, respectively, from $p_{0.74}((K, \omega_0, \gamma)~|~\obs y)$ and $p_{0.1}((K, \omega_0, \gamma)~|~\obs y)$, respectively.

\begin{figure*}[p]
	\centering
	\includegraphics[width=\textwidth]{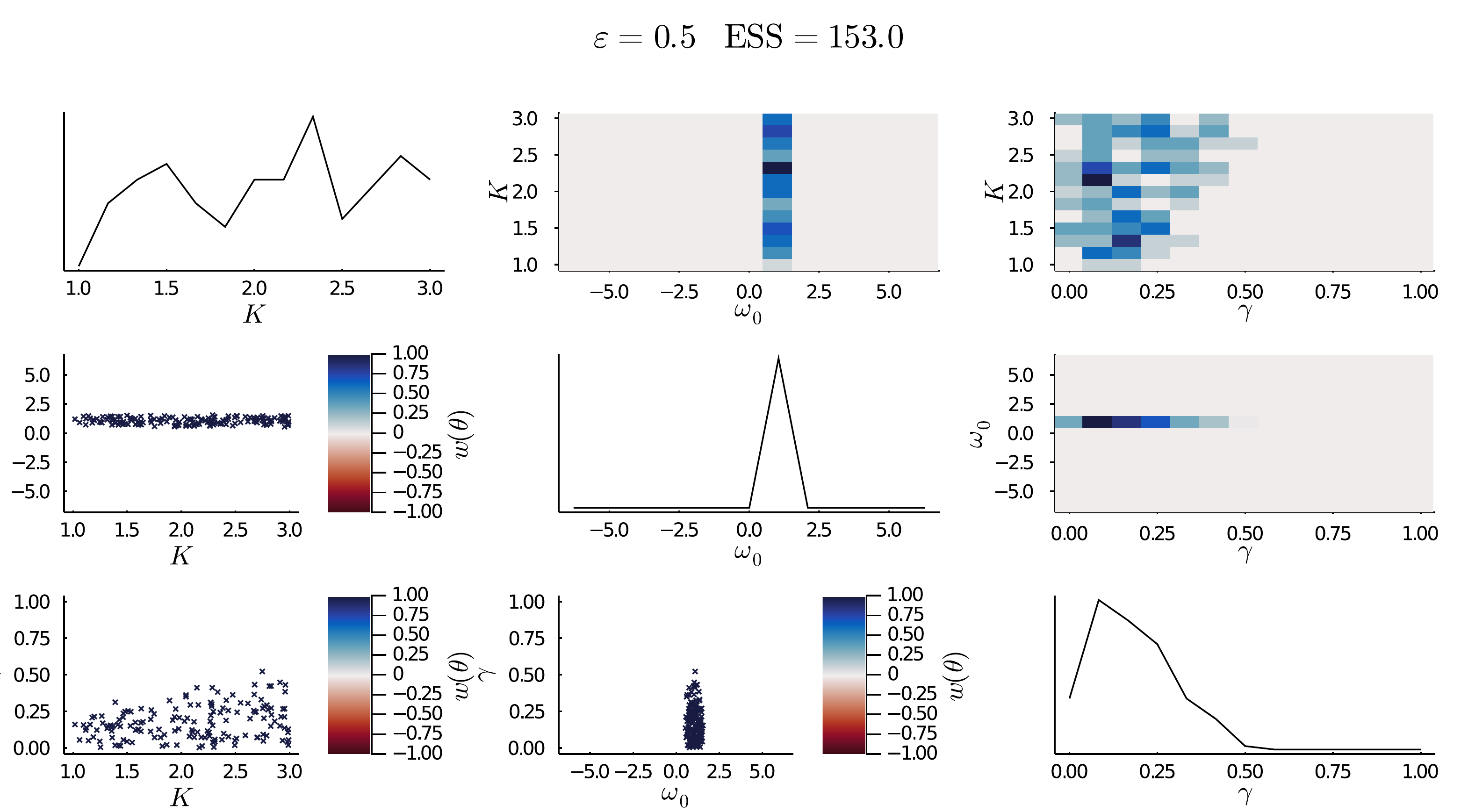}
	\caption{
		Sample from ABC posterior generated by \Cref{ABC:Importance}.
	}
	\label{post:ABC:Rejection}
\end{figure*}

\begin{figure*}[p]
	\centering
	\includegraphics[width=\textwidth]{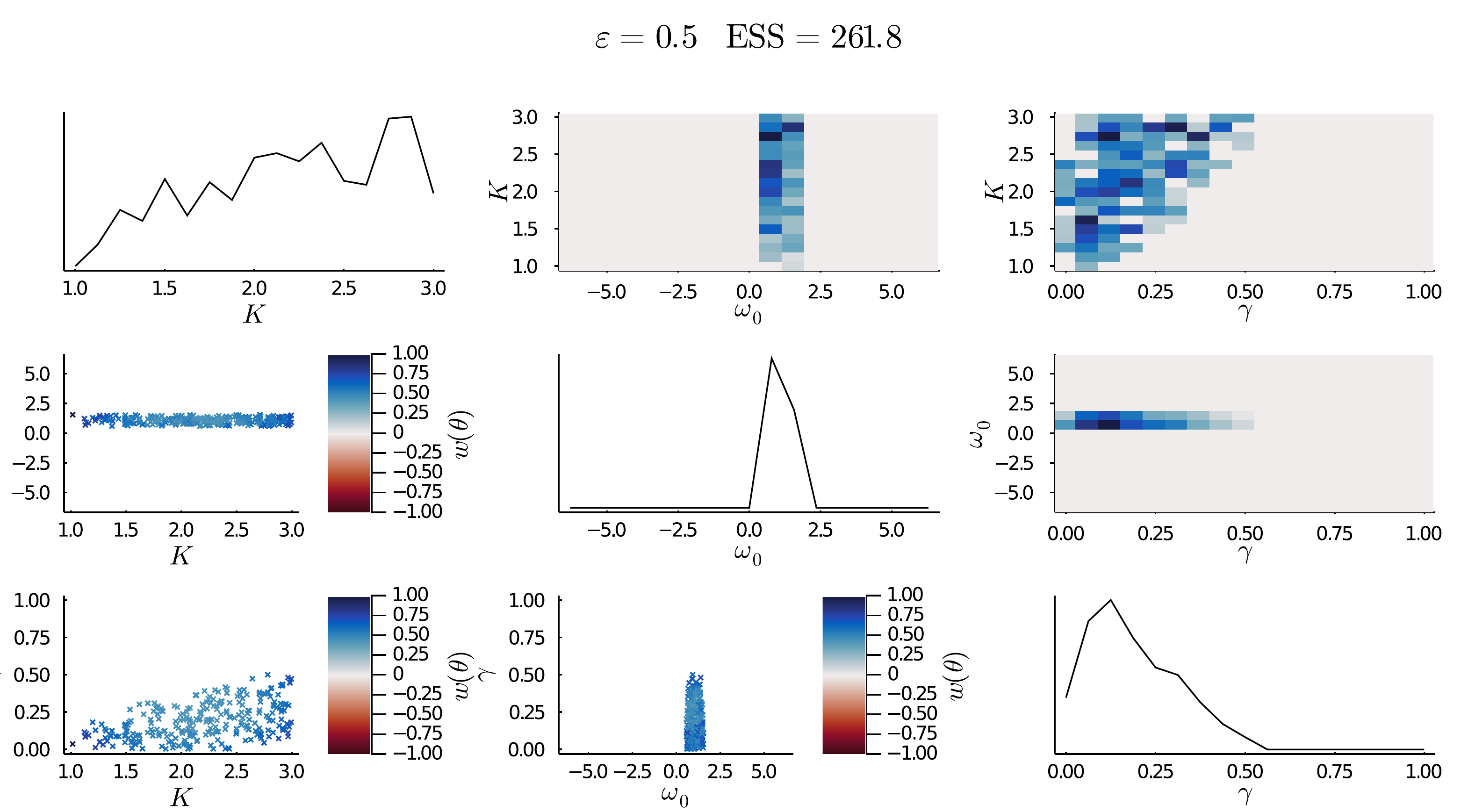}
	\caption{
		Sample from ABC posterior produced using the final generation of \Cref{ABC:SMC}.
	}
	\label{post:ABC:SMC}
\end{figure*}

\begin{figure*}[p]
	\centering
	\includegraphics[width=\textwidth]{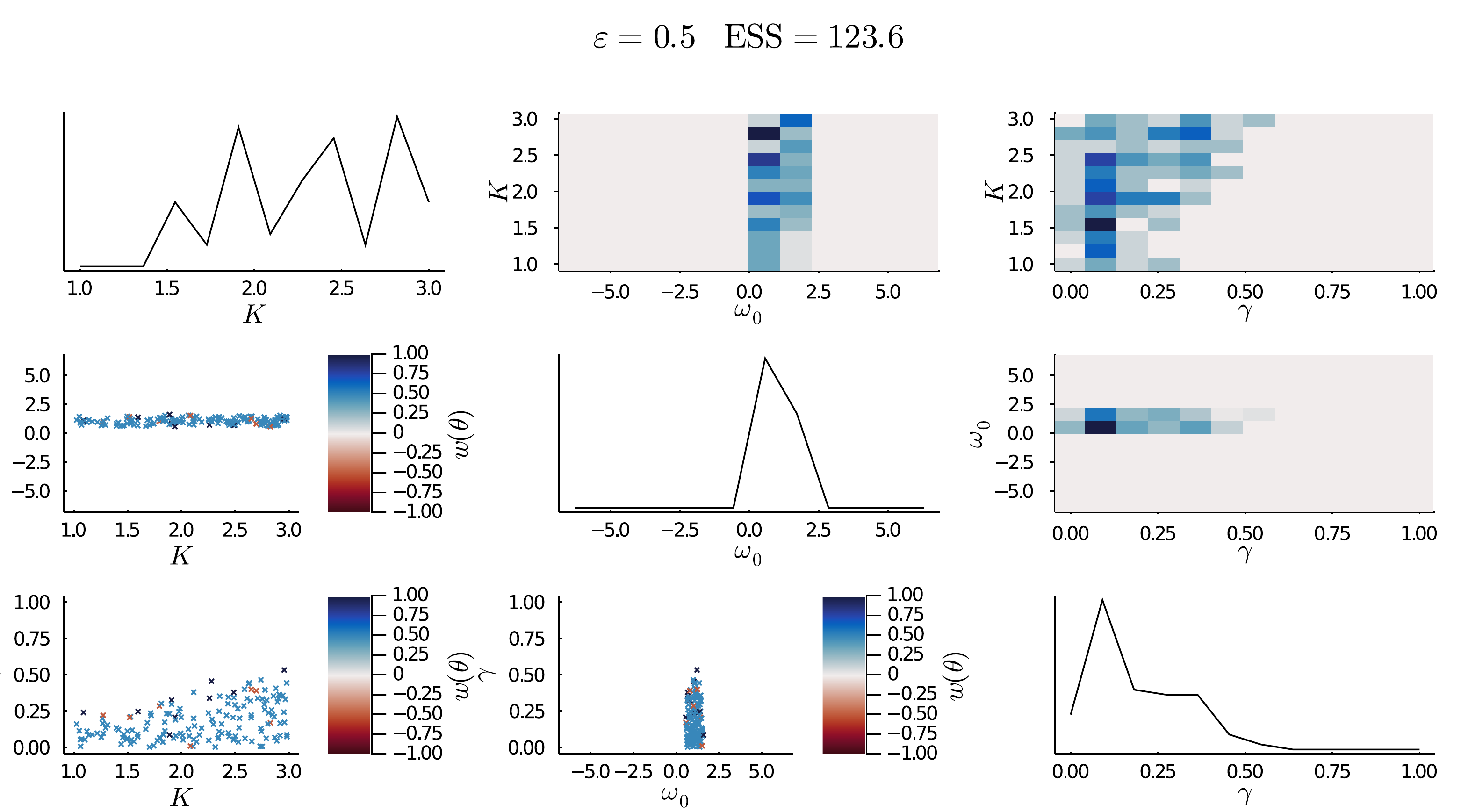}
	\caption{
		Sample from ABC posterior generated by \Cref{MFABC:Importance}.
	}
	\label{post:MFABC:Rejection}
\end{figure*}

\begin{figure*}[p]
	\centering
	\includegraphics[width=\textwidth]{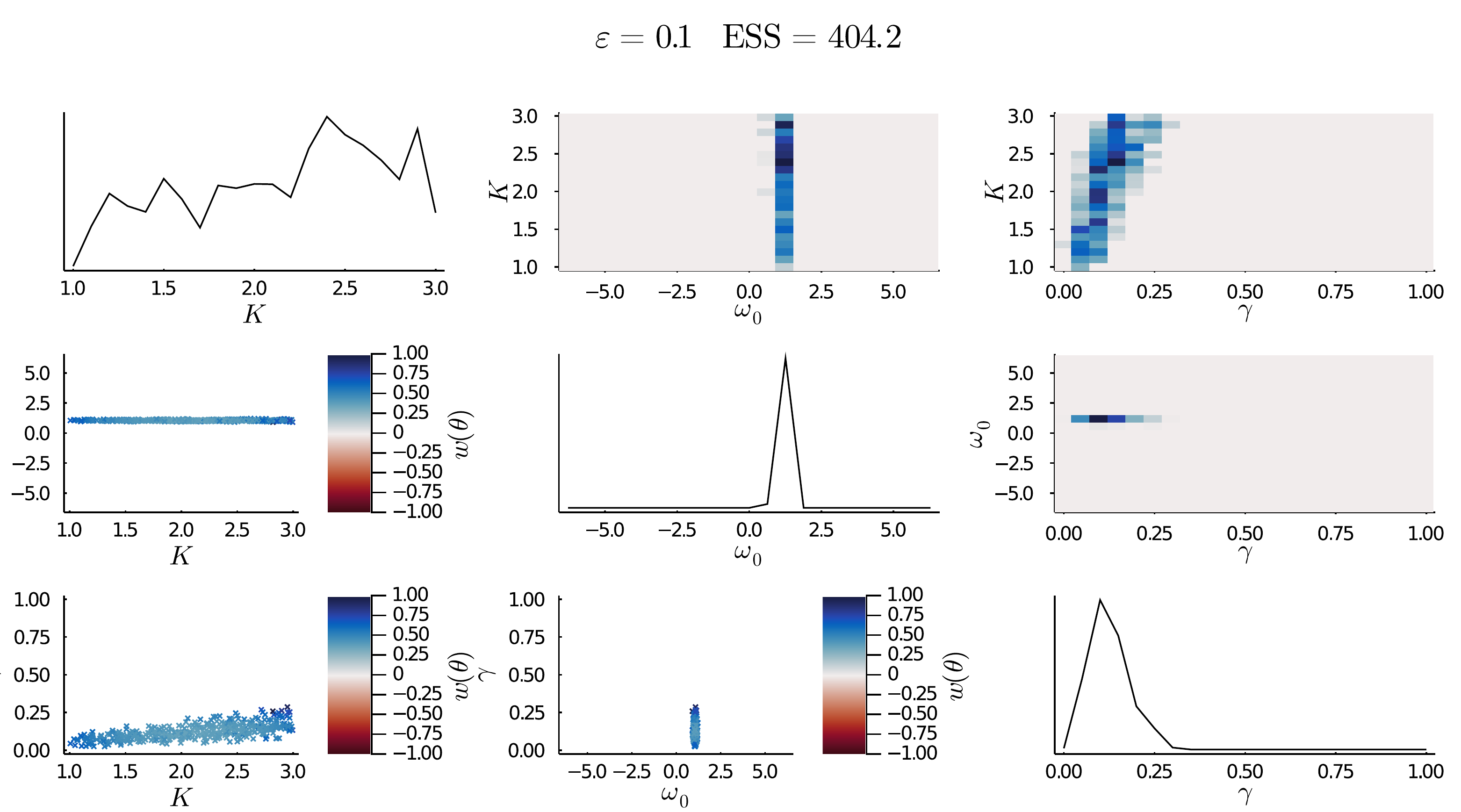}
	\caption{
		Sample from ABC posterior produced using the final generation of \Cref{ABC:SMC}.
	}
	\label{post:ABC:SMC:ESS400}
\end{figure*}

\begin{figure*}[p]
	\centering
	\includegraphics[width=\textwidth]{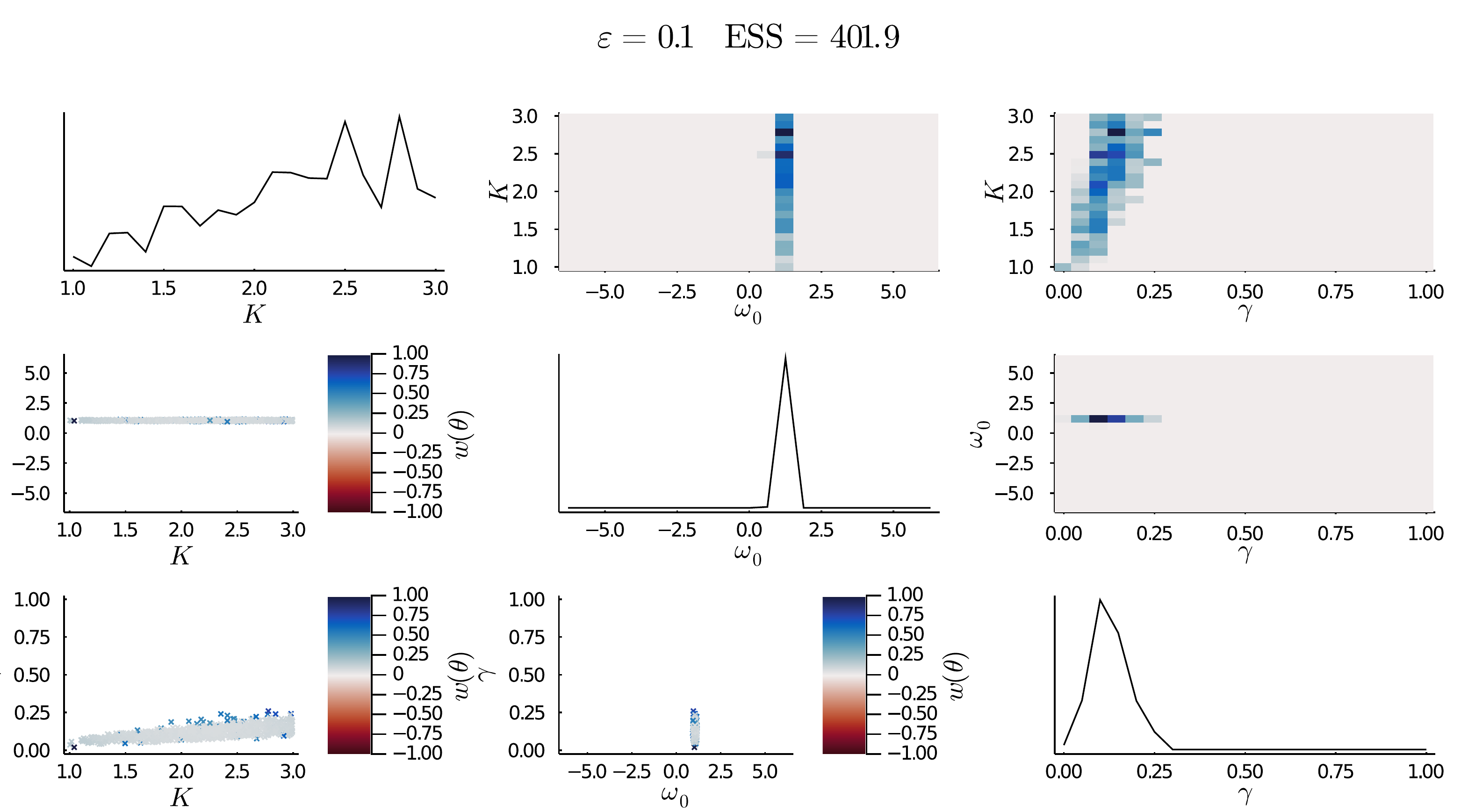}
	\caption{
		Sample from ABC posterior produced using the final generation of \Cref{MFABC:SMC}.
	}
	\label{post:MFABC:SMC:ESS400}
\end{figure*}

\begin{figure*}[p]
	\centering
	\includegraphics[width=\textwidth]{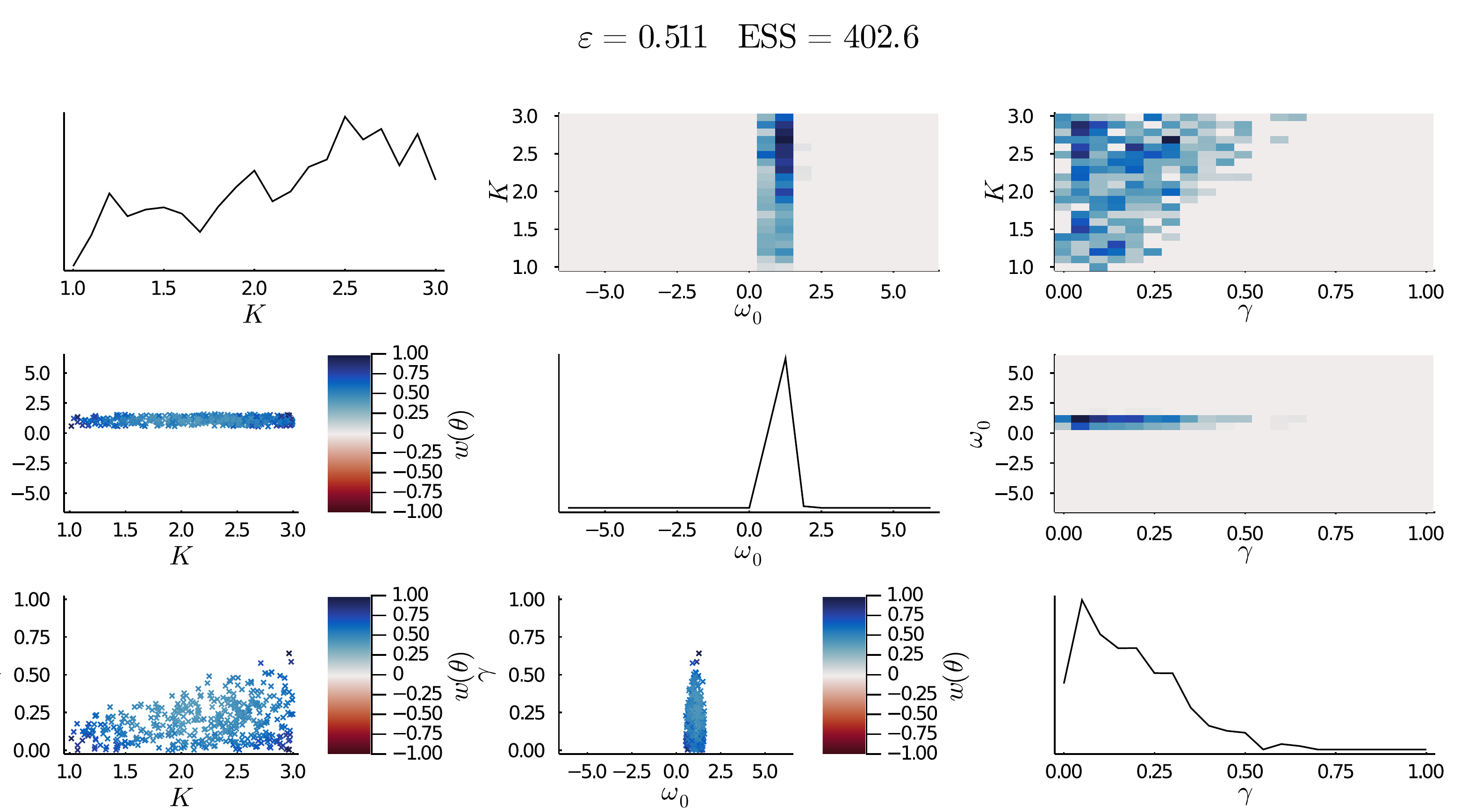}
	\caption{
		Sample from ABC posterior produced by the final generation of the adaptive modification of \Cref{ABC:SMC}, as described in \Cref{s:Adaptive}.
	}
	\label{post:ABC:SMC:adaptive}
\end{figure*}

\begin{figure*}[p]
	\centering
	\includegraphics[width=\textwidth]{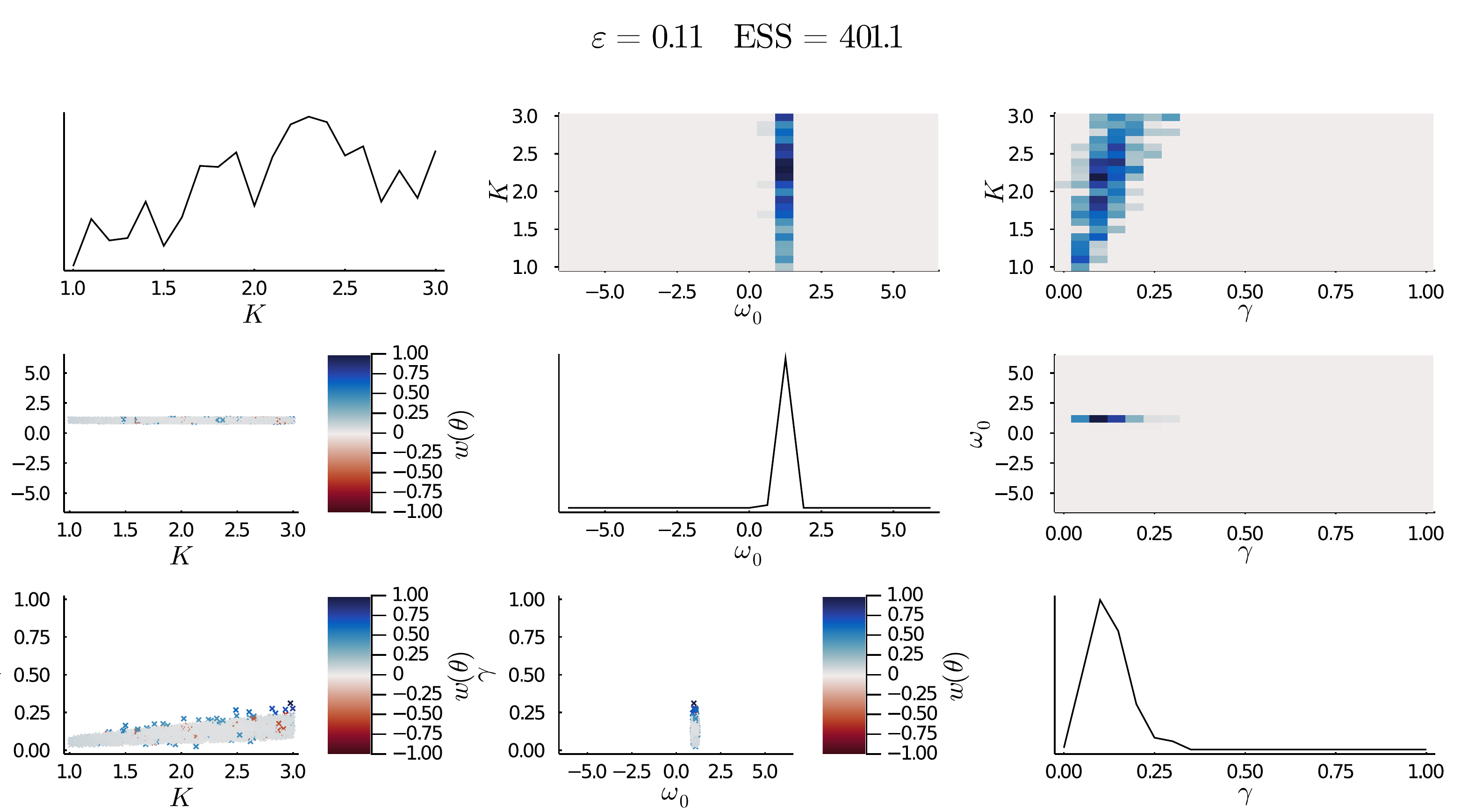}
	\caption{
		Sample from ABC posterior produced by the final generation of the adaptive modification of \Cref{MFABC:SMC}, as described in \Cref{s:Adaptive}.
	}
	\label{post:MFABC:SMC:adaptive}
\end{figure*}

\end{document}